\documentclass[letterpaper, 10 pt, conference]{ieeeconf}  % Comment this line out if you need a4paper

\IEEEoverridecommandlockouts                              % This command is only needed if 
\usepackage{times,xcolor}
\usepackage{amsmath,amssymb,amsfonts,booktabs}
\usepackage{algorithm}
\usepackage{algpseudocode}
\usepackage{bm}
\usepackage{mathtools}
\usepackage{multicol}
\usepackage{graphicx}
\usepackage{placeins}
\usepackage{cite}
\usepackage{censor}
\usepackage[hidelinks]{hyperref}
\usepackage{float}
\usepackage{tabularx}
\usepackage{makecell}
\usepackage{eso-pic}
\newtheorem{theorem}{Theorem}
\newtheorem{proposition}[theorem]{Proposition}
\newtheorem{lemma}[theorem]{Lemma}
\newtheorem{remark}{Remark}
\title{\LARGE \bf
Statistical Learning of Contractive Dynamical Representations\\for Composite Adaptive Control}

\author{\censor{Min Kim$^{1, *}$, Jos\'e Leonardo Brenes$^{1, *}$, Fred Hadaegh$^{1,2}$, Soon-Jo Chung$^{1,2}$}%
\thanks{\censor{* These authors contributed equally.}}%
\thanks{\censor{$^{1}$ California Institute of Technology (Caltech), Pasadena, CA 91125 USA.}}%
\thanks{\censor{$^{2}$ Jet Propulsion Laboratory (JPL), Pasadena, CA 91109 USA.}}%
\thanks{{\censor{\tt\small \{mink, jbrenes, hadaegh, sjchung\}@caltech.edu}}}
}

\begin{document}
\StopCensoring

\AddToShipoutPictureFG*{%
\AtPageUpperLeft{%
\put(\LenToUnit{0.5in},\LenToUnit{-0.25in}){%
      \makebox(0,0)[lt]{%
        \parbox[t]{\dimexpr\paperwidth-1in\relax}{%
          \normalfont
          \fontsize{6.5}{7.5}\selectfont
          \color{black}
          \raggedright
          \copyright~2026 IEEE.  Personal use of this material is permitted.  Permission from IEEE must be obtained for all other uses, in any current or future media, including reprinting/republishing this material for advertising or promotional purposes, creating new collective works, for resale or redistribution to servers or lists, or reuse of any copyrighted component of this work in other works.\\
          Accepted to the 2026 IEEE/RSJ International Conference on Intelligent Robots and Systems (IROS 2026).
          \par
        }%
      }%
    }%
  }%
}

\maketitle
\thispagestyle{empty}
\pagestyle{empty}

%%%%%%%%%%%%%%%%%%%%%%%%%%%%%%%%%%%%%%%%%%%%%%%%%%%%%%%%%%%%%%%%%%%%%%%%%%%%%%%%
\begin{abstract}
We present a representation-learning framework for composite adaptive tracking control under dynamically coupled disturbances.  The framework connects classical disturbance-accommodating control (DAC) to recent last-layer adaptive disturbance-rejection methods. Specifically, we introduce a statistically principled hard expectation--maximization (hard-EM) procedure, with a Kalman smoother in the hard E-step, to identify dynamical representations of disturbance whose latent evolution is uniformly contractive. The learned representation evolves a latent disturbance-excitation state from measured plant features and control inputs and decodes that state into the time-varying disturbance acting on the nominal plant, thereby extending prior ``fixed-decay'' last-layer adaptive methods to a learned, predictive DAC-style formulation. Combined with Bayesian filtering of the learned latent state, this representation yields a composite adaptive tracking controller with predictive capability and provable exponential convergence to a bounded neighborhood. We validate our approach experimentally on a slippery ground vehicle carrying a liquid-sloshing tank and a pendulum load, and we further assess its robustness on a system of coupled Duffing oscillators. Across both settings, the method achieves accurate disturbance prediction and improved overall tracking performance relative to fixed-decay representation-learning ablations, LTI disturbance-accommodating baselines, and model-based PD baselines.
\end{abstract}

%%%%%%%%%%%%%%%%%%%%%%%%%%%%%%%%%%%%%%%%%%%%%%%%%%%%%%%%%%%%%%%%%%%%%%%%%%%%%%%%
\section{INTRODUCTION}
There is a substantial body of literature on \emph{representation learning} for \emph{last-layer adaptive control} and disturbance rejection \cite{OCSh2022, LuXi2025, YiOn2025}; however, many existing applications focus on \emph{external} disturbances such as aerodynamic effects or terrain-induced slip. In such settings, disturbances are captured by a latent variable that is treated as \emph{exogenous}; its value shifts as the ``environment'' changes, sometimes with an assumed decay rate, which corresponds to the Ornstein--Uhlenbeck process. In contrast, when the unknown disturbance is \emph{dynamically coupled} with the controlled plant, such as liquid sloshing in a vehicle's tank, it is natural to view the latent variable as an augmented dynamical state that evolves according to its own dynamics, in the spirit of disturbance-accommodating control (DAC). In such cases, we expect adaptive controllers with predictive capability to have a comparative advantage over counterparts that do not incorporate prediction. This motivates the learning of dynamical representations tailored to adaptive control; in this paper, we develop a representation that captures the coupled disturbance source's excitation mechanism and its influence on the main system.

\begin{figure} [!t]
    \centering
    \includegraphics[width=\linewidth]{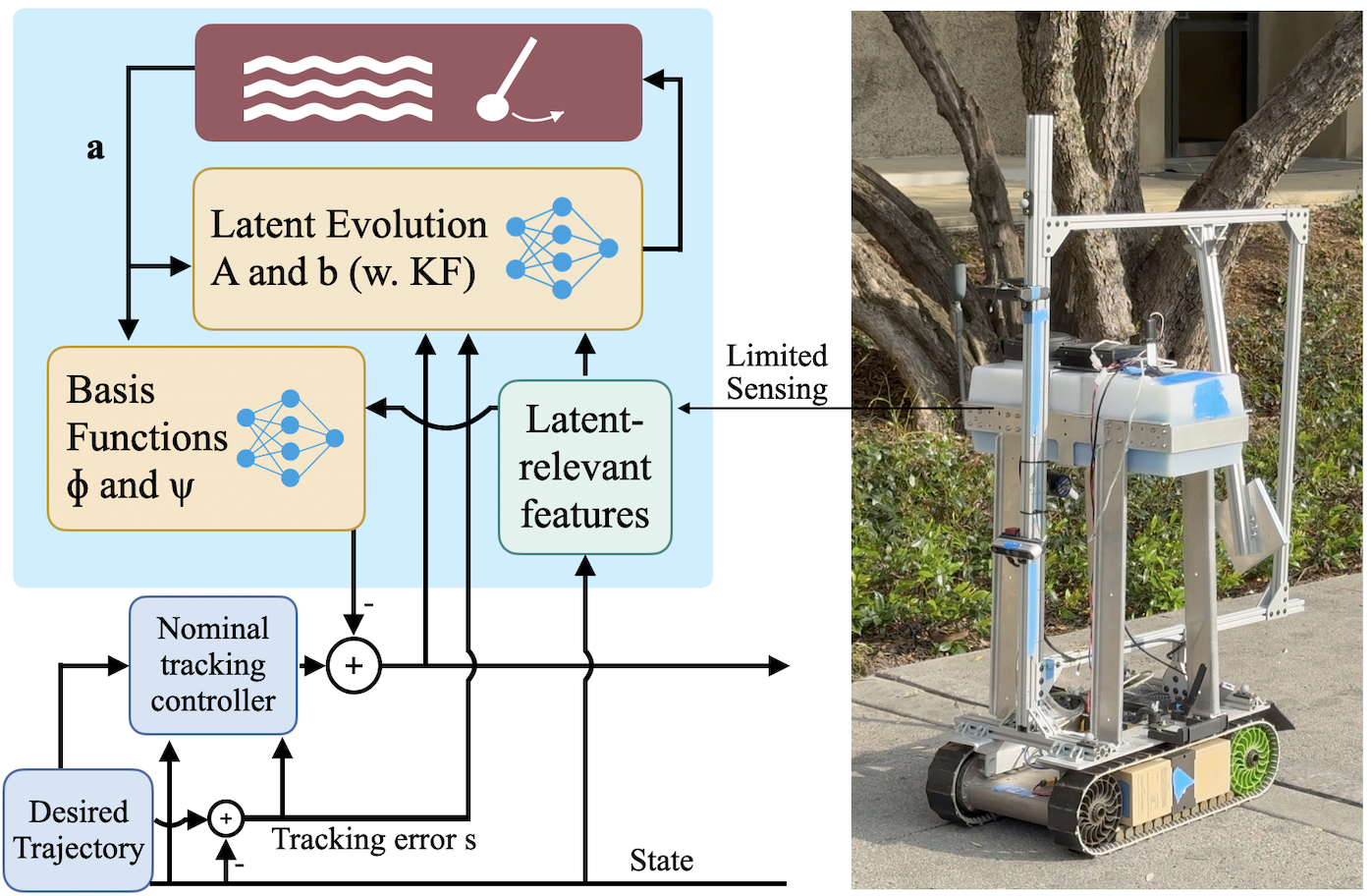}
\caption{Proposed adaptive control architecture and experimental platform. Left: block diagram of the controller. A latent state is estimated online via a Kalman--Bucy-style filter from the disturbance proxy $y$, using a latent-state evolution model parameterized by $(A,b)$. The basis functions $(\Phi,\psi)$ map latent-relevant features $\phi$ to an adaptive disturbance-rejection term that augments a nominal tracking controller. The quantities $(A,b,\Phi,\psi)$ are parameterized by neural networks and learned using a statistically principled training procedure. Right: tracked mobile robot used for experimental validation.}
    \label{fig:figureone}
    \vspace{-10pt}
\end{figure}

The contributions of this paper are threefold: \emph{(i)} we introduce a statistically principled hard expectation--maximization training framework, with Kalman-smoother-based latent inference (Fig.~\ref{fig:figuretwo}), for learning \emph{contractive dynamical latent representations} of disturbance from data; \emph{(ii)} we design a neural architecture for the dynamical representation that enforces uniform contraction, is compatible with Bayesian filtering, and yields predictive disturbance estimates with stability guarantees when used in a composite adaptive controller; and \emph{(iii)} we demonstrate the method on hardware and in simulation, showing accurate disturbance prediction and improved tracking over fixed-decay representation-learning ablations and LTI-based DAC. Taken together, these results show that statistically principled learning of contractive latent disturbance dynamics can bridge representation learning and composite adaptive control for systems with dynamically coupled disturbances.
\begin{figure} [!h]
\vspace{5pt}
    \centering
    \includegraphics[width=\linewidth]{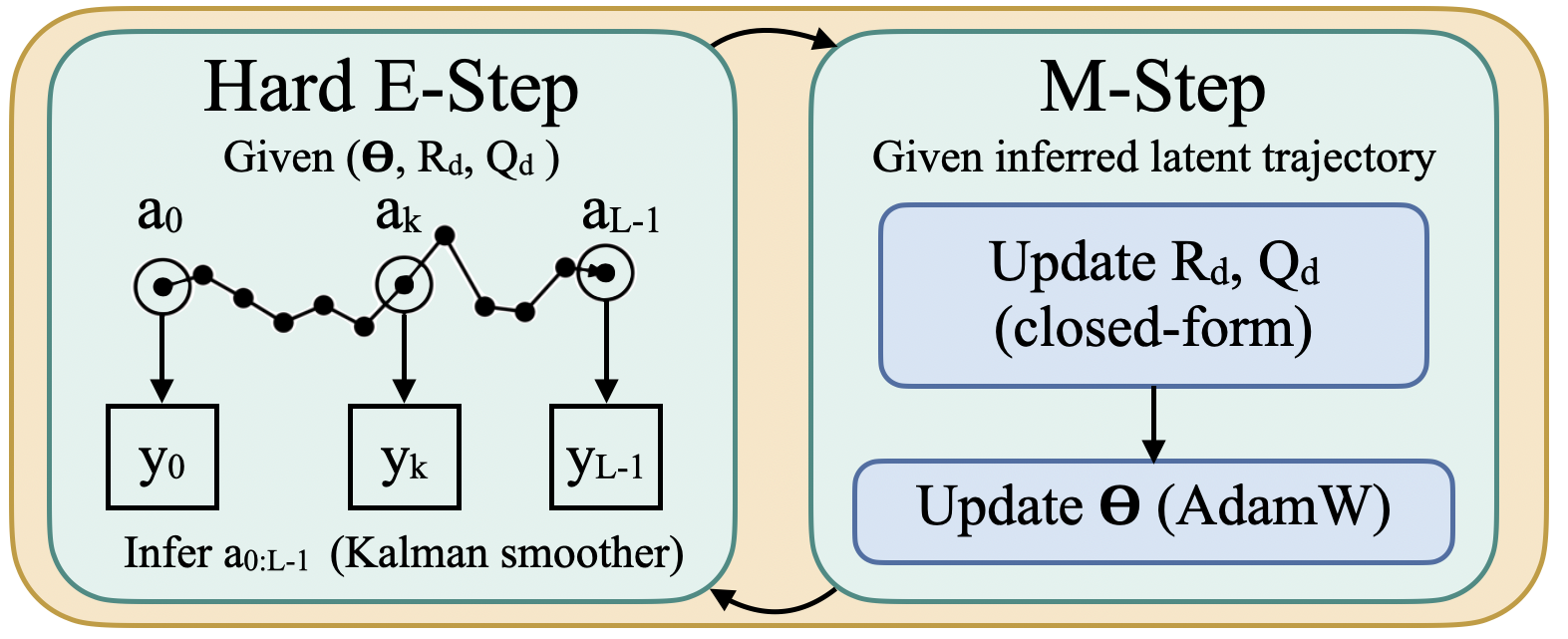}
    \vspace{-15pt}
    \caption{Statistically principled training via hard-EM with exact latent inference. The procedure alternates between a hard E-step and an M-step: given current parameters $(\theta,R_d,Q_d)$, the hard E-step infers the latent trajectory $\mathbf a_{0:L-1}$ for each data window, and the M-step updates the noise covariances $(R_d,Q_d)$ from residual statistics and updates the network parameters $\theta$.}
    \label{fig:figuretwo}
    \vspace{-15pt}
\end{figure}

We next position our paper relative to prior work. From the perspective of \emph{classical control}, our approach is closely related to disturbance-accommodating control (DAC), which represents disturbance using an augmented disturbance state with its own dynamics and estimates this augmented state online for compensation \cite{Jo1986}. DAC motivates treating the disturbance as an estimable and predictable dynamical quantity rather than as a memoryless exogenous input.
Our perspective is also connected to the robust servomechanism problem \cite{Da1976}, where disturbance signals are generated by an exosystem and robust output regulation requires embedding an internal model \cite{FrWo1976} of the disturbance in the closed loop. From this standpoint, our method can be interpreted as learning a data-driven, contractive analogue of the disturbance generator in the spirit of identification-based internal-model regulation \cite{BiMa2019}. In contrast to \cite{BiMa2019}, which adapts an internal model by linear regression, we identify an affine-in-latent model via statistical learning and exploit Kalman-based estimation.

\emph{Learning-based control} has been demonstrated on physical robot platforms, including quadrotors under wind disturbance \cite{OCSh2022, pmlr-v155-joshi21a}, drone racing \cite{SoSt2021}, learning-based MPC for quadrotors \cite{ChSi2023} and autonomous car racing \cite{KaHe2019}, rapid motor adaptation for bipedal locomotion \cite{KuLi2022}, and robust perceptive locomotion for quadrupedal robots in natural environments \cite{MiLe2022}. These hardware demonstrations highlight the importance of controllers that remain reliable under modeling errors and disturbances, which are often induced or amplified by robotic maneuvers. Motivated by this need, recent learning-based MPC approaches improve tracking under model mismatch by incorporating learned disturbance or residuals into the MPC framework \cite{KrZa2025,ZhGe2025}. Also, \cite{10288520} learns a Koopman-based disturbance dynamics model and uses the corresponding disturbance observer for compensation. In contrast, we learn a latent disturbance dynamics model by a Kalman-smoother-based latent trajectory inference and hard expectation--maximization (EM). We note that the learning strategy presented in this paper is analogous to previous smoother-based EM methods for system identification \cite{GuCh2024,WaLi2025}, although these papers do not involve neural networks.

In this paper, we focus on disturbance representations that are uniformly contractive \cite{LoSl1998, Bu2026} and are affine in the latent variable (see (\ref{eq:latent_model})). The contraction property of the dynamical representation is required for insensitivity to the \emph{unknown} initial latent state (Proposition~\ref{prop:insensitivity_IC}), and is physically plausible for dissipative disturbance dynamics. On the other hand, the affine-in-latent form enables closed-form Bayesian estimation. This motivates us to use a hard-EM procedure to train the neural networks (Fig.~\ref{fig:figuretwo} and Section~\ref{sec:nn_arch_training}); during training, the latent trajectory is repeatedly estimated by Kalman smoothing. Moreover, using the Kalman--Bucy filter for the latent state, we synthesize a composite adaptive controller~\cite{slotine1989composite,OCSh2022,LuXi2025}, whose exponential tracking guarantees are analyzed in Theorems~\ref{thm:2ndorder} and \ref{thm:1storder}. This analysis relies on a covariance bound for uniformly contractive systems (Lemma~\ref{lemma:bounded_cov}).

In Section~\ref{sec:exp_results}, we evaluate the learned dynamical representation and the resulting adaptive controller on two different scenarios. First, we study a ground vehicle \censor{(``GVR-Bot'')} equipped with a liquid-sloshing tank, a pendulum load, and an unknown, inaccessible internal PID controller; vehicle maneuvers excite both the fluid sloshing and the pendulum motion. Then, we consider a system of coupled Duffing oscillators as a numerical testbed, where the robustness of the proposed tracking controller is assessed under increasingly adverse Gaussian-noise conditions. These examples show not only that the method can learn predictive maneuver-to-disturbance mappings, but also that the learned mappings can be readily and effectively incorporated into composite adaptive controllers.

\section{THE CONTROL METHOD AND GUARANTEES}
The main controller equations are given by the theorems in Section~\ref{subsec:controllers}.

\subsection{System Description and the Dynamical Representation}
We consider systems with additive disturbances of the form:
\begin{equation} \label{eq:second_order}
    M(q) \ddot{q} + C(q, \dot q) \dot q+ g(q) = u + d
\end{equation}
where the coefficient functions are continuously differentiable and $\dot M - 2C$ is skew-symmetric. We also consider the simpler first-order system:
\begin{equation} \label{eq:first_order}
    \dot v = f(v) + u + d,
\end{equation}
with continuously differentiable $f$. We aim to learn disturbance estimators of the form \eqref{eq:latent_model} for use in composite adaptive control \cite{SlLi1991}:
\begin{subequations}\label{eq:latent_model}
\begin{align}
      \dot {\mathbf{a}} &= A(\phi(t), u(t)) \mathbf{a} + b(\phi(t), u(t)), \label{eq:latent_dyn}\\
      \hat d &= \Phi(\phi(t)) \mathbf{a} + \psi (\phi(t)),
\end{align}
\end{subequations}
where $A: \mathbb{R}^{d_\phi + d_u} \to \mathbb{R}^{d_a \times d_a}$, $b: \mathbb{R}^{d_\phi + d_u} \to \mathbb{R}^{d_a}$, $\Phi: \mathbb{R}^{d_\phi} \to \mathbb{R}^{d_u \times d_a}$, $\psi: \mathbb{R}^{d_\phi}  \to \mathbb{R}^{d_u}$ are continuous functions to be learned.
 
Intuitively, $\mathbf{a}\in \mathbb{R}^{d_a}$ summarizes disturbance-relevant hidden dynamics; $A$ and $b$ define its evolution, while $\Phi$ and $\psi$ decode it into a disturbance estimate. We emphasize that $\phi:\mathbb{R}_{\ge 0} \to \mathbb{R}^{d_\phi}$ is a continuous ``feature'' signal which is determined causally. In most cases $\phi$ is simply the state: $\phi=(q, \dot q)$, but it could additionally contain measurements from on-board sensors, e.g., a camera. Here, $\mathbf{a}$ statistically explains the disturbance $d$ and evolves according to the dynamic model \eqref{eq:latent_dyn}. Unless otherwise stated, the control signals $u: \mathbb{R}_{\ge 0} \to \mathbb{R}^{d_u}$ are continuous and $\| \cdot \|$, when applied to a matrix, denotes the spectral norm.

Note that in~\eqref{eq:latent_model} the estimate $\hat d(t)$ does not depend on the instantaneous control input $u(t)$. This restriction may be undesirable in some settings. An alternative is to linearly approximate the map $u(t) \mapsto d(t)$; to do so, set
\begin{equation}\label{eq:affine_readout}
    \hat d_k(t) \;=\; u_{\text{aug}}(t)^\top \bigl(\Phi_k \mathbf{a} + \psi_k \bigr)
\end{equation}
for each disturbance coordinate $1 \le k\le \operatorname{dim}(d)$, where the augmented input $u_{\text{aug}}(t)\coloneqq [u(t)^\top,\,1]^\top$. This parameterization is convenient for controller synthesis: since $u(t)$ enters \eqref{eq:affine_readout} \emph{affinely}, the relation $u(t)+\hat d(t)=u_{\text{desired}}(t)$ can be easily solved for $u(t)$; in practice, one solves the Tikhonov-regularized normal equations centered at $u_{\text{desired}}(t)$ for robustness. Finally, we also note that a prototypical class of examples for~\eqref{eq:latent_model} consists of multi-component mechanical systems coupled through force elements.

Hereafter, we assume access to an online noisy estimate $y(t)=d(t)+\epsilon(t)$, where $\epsilon(t)$ is a continuous, additive error. If $y(t)$ were exact and the control loop were sufficiently fast relative to the disturbance timescale, direct cancellation would largely suffice; in practice, however, $y(t)$ is often obtained by numerically differentiating the measured $q$ or $v$, which amplifies sensor noise, and actuator delays further limit the feasibility of instantaneous cancellation for rapidly varying disturbances.

The proposed learned dynamical representation instead produces systematically refined estimates of the current disturbance $d(t)$ via the Kalman--Bucy filter, by fusing the full measurement history $y(\cdot)$ and, implicitly, all measurements available in the training set. When the disturbance source is dynamically coupled to the plant, learning an explicit latent dynamics is a natural design choice; these learned dynamics then serve as predictive process models in the Kalman--Bucy filter. Moreover, when constructing disturbance labels for training, one may leverage higher-quality off-board sensing and non-causal signal processing to obtain more reliable numerical derivatives than would be available on-board in real time. The refined estimates from the filter are then used for disturbance compensation, thereby yielding composite adaptive tracking controllers with provable convergence guarantees (see Theorems~\ref{thm:2ndorder} and \ref{thm:1storder}).

Before proving the main Theorems \ref{thm:2ndorder} and \ref{thm:1storder} which give the controllers (\eqref{eq:feedback_2ndorder} or \eqref{eq:feedback_1storder} with \eqref{eq:KalmanComposite}), we first formalize the intuition that uniformly contractive linear systems are insensitive to initial conditions which will be \emph{unknown} in practice.
\begin{proposition} [Exponential insensitivity to initial condition] \label{prop:insensitivity_IC}
Suppose $\frac 1 2 \left( A+A^\top \right) \preceq - \lambda I$ and $\|\Phi\| \le C$ pointwise for some constants $\lambda>0$ and $C>0$. For a fixed pair $\phi$ and $u$, any two trajectories $\mathbf{a}_1$ and $\mathbf{a}_2$ of \eqref{eq:latent_model} satisfy
\begin{equation}
   \|\hat d_1 (t) - \hat d_2(t)\| \le C e^{-\lambda t} \|\mathbf{a}_1(0) - \mathbf{a}_2(0)\|
\end{equation}
for all $t \ge 0$.
\end{proposition}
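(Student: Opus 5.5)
The approach is the standard contraction argument applied to the difference of two trajectories. Fix $\phi$ and $u$, and write $A(t) \coloneqq A(\phi(t),u(t))$ and $b(t) \coloneqq b(\phi(t),u(t))$; both are continuous in $t$, so solutions of \eqref{eq:latent_dyn} exist and are $C^1$. Let $e(t) \coloneqq \mathbf{a}_1(t) - \mathbf{a}_2(t)$. Because both trajectories see the same input term $b(t)$, it cancels in the difference, leaving the linear time-varying homogeneous system
\begin{equation*}
\dot e = A(t)\, e .
\end{equation*}
The offset $\psi(\phi(t))$ also cancels in the readout, so $\hat d_1(t) - \hat d_2(t) = \Phi(\phi(t))\, e(t)$.

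The key step is to show $\|e(t)\| \le e^{-\lambda t}\|e(0)\|$. I would use the squared-norm Lyapunov function $V = \|e\|^2$:
\begin{equation*}
\dot V = e^\top \bigl(A(t) + A(t)^\top\bigr) e \le -2\lambda V .
\end{equation*}
The inequality uses $\tfrac12(A+A^\top) \preceq -\lambda I$ pointwise, which holds because the hypothesis is assumed for all arguments and hence along the given $\phi$, $u$. The comparison lemma, or directly the fact that $\tfrac{d}{dt}\bigl(e^{2\lambda t}V\bigr) \le 0$, then gives $V(t) \le e^{-2\lambda t} V(0)$. Taking square roots yields $\|e(t)\| \le e^{-\lambda t}\|e(0)\|$.

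The conclusion follows from submultiplicativity of the spectral norm together with $\|\Phi\| \le C$:
\begin{equation*}
\|\hat d_1(t) - \hat d_2(t)\| \le \|\Phi(\phi(t))\|\,\|e(t)\| \le C e^{-\lambda t}\|\mathbf{a}_1(0) - \mathbf{a}_2(0)\| .
\end{equation*}

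There is no real obstacle here. The only point needing care is noting that the differential inequality holds for all $t \ge 0$, since $\phi$ and $u$ are fixed and continuous. This is what lets the time-varying nature of $A$ be handled purely through the uniform matrix-measure bound, without any eigenvalue argument, which would fail for time-varying systems.
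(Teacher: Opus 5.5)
Your proof is correct and follows the paper's argument essentially verbatim: you differentiate $\|\mathbf{a}_1-\mathbf{a}_2\|^2$, use $\frac12(A+A^\top)\preceq-\lambda I$ to get $\|\mathbf{a}_1(t)-\mathbf{a}_2(t)\|\le e^{-\lambda t}\|\mathbf{a}_1(0)-\mathbf{a}_2(0)\|$, and finish with $\hat d=\Phi\mathbf{a}+\psi$ and $\|\Phi\|\le C$. You also make explicit two steps the paper leaves implicit: the cancellation of $b$ and $\psi$ in the differences, and the integrating-factor argument that turns the differential inequality into the exponential bound.
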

\begin{proof}
We have $\frac{d}{dt} \left(\left\|\mathbf{a}_1(t)-\mathbf{a}_2(t)\right\|^2 \right) = 2\left(\mathbf{a}_1(t)-\mathbf{a}_2(t) \right)^\top \left(\dot {\mathbf{a}}_1(t)-\dot {\mathbf{a}}_2(t) \right) =\left(\mathbf{a}_1(t)-\mathbf{a}_2(t) \right)^\top (A+A^\top)\left(\mathbf{a}_1(t)-\mathbf{a}_2(t) \right) \le -2 \lambda \left\|\mathbf{a}_1(t)-\mathbf{a}_2(t)\right\|^2$. Therefore, $\left\|\mathbf{a}_1(t)-\mathbf{a}_2(t)\right\| \le e^{-\lambda t} \left\|\mathbf{a}_1(0)-\mathbf{a}_2(0)\right\|$. Now, the claimed result follows from $\hat d= \Phi \mathbf{a} + \psi$.
\end{proof}

Moreover, uniformly contractive linear systems lead to bounded covariance estimates when the Kalman--Bucy filter is used. Specifically, we consider the following covariance evolution equation:
\begin{equation} \label{eq:covariance}
    \dot P = A(t) P + P A (t)^\top + Q - P \Phi(t)^\top R^{-1} \Phi(t) P
\end{equation}
with a fixed initial condition $P(0)=P_0 \succ0$ and $Q \succeq 0$, $R \succ 0$. Here, $A(t)\coloneqq A(\phi(t),u(t))$ and $\Phi(t) \coloneqq \Phi(\phi(t))$.
\begin{lemma} \label{lemma:bounded_cov}
Suppose $\frac 1 2 \left( A+A^\top \right) \preceq - \lambda I$ for some constant $\lambda>0$. Then the unique solution of \eqref{eq:covariance} is positive definite and satisfies
\begin{equation}
\lambda_{\max}\bigl(P(t)\bigr) \le e^{-2\lambda t}\lambda_{\max}(P_0) + \frac{\lambda_{\max}(Q)}{2\lambda} \bigl(1-e^{-2\lambda t}\bigr)
\end{equation} for all $t \ge 0$.
Moreover, if $\Phi$ is bounded, $Q \succ 0$, and $\frac{1}{2} (A+A^\top)\succeq -\mu I$, then $\lambda_{\min}\left(P(t)\right) \ge p_*$ for some positive $p_*$ given by an explicit formula.
\end{lemma}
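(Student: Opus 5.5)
The plan is to treat the Riccati equation \eqref{eq:covariance} as a time-varying matrix ODE. Existence, uniqueness, positive definiteness and the upper bound will come from a comparison argument on quadratic forms. The lower bound will come from working with the inverse $S \coloneqq P^{-1}$, which satisfies a Riccati equation of dual type.

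\emph{Existence, positivity and the upper bound.} The right-hand side of \eqref{eq:covariance} is locally Lipschitz in $P$ and continuous in $t$, so a unique local solution exists. It stays symmetric because the transpose of a solution is also a solution.

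For positive definiteness, I would write $\dot P = (A - \tfrac12 P\Phi^\top R^{-1}\Phi)P + P(A-\tfrac12 P\Phi^\top R^{-1}\Phi)^\top + Q$. This is a Lyapunov equation $\dot P = F(t)P + PF(t)^\top + Q$ with $F$ continuous along the solution. Its variation-of-constants formula $P(t)=\Psi(t,0)P_0\Psi(t,0)^\top+\int_0^t \Psi(t,s)Q\Psi(t,s)^\top ds$ shows $P(t)\succ0$ on the existence interval.

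For the upper bound, fix a unit vector $x$ and use $-P\Phi^\top R^{-1}\Phi P\preceq 0$ to get $x^\top\dot P x \le x^\top(AP+PA^\top)x + \lambda_{\max}(Q)$. A cleaner route is to compare with the solution of $\dot{\bar P} = A\bar P+\bar PA^\top + Q$, $\bar P(0)=P_0$. The difference $\bar P - P$ solves a Lyapunov equation with nonnegative forcing $P\Phi^\top R^{-1}\Phi P$, so $P\preceq\bar P$. Then $\bar P$ has the explicit integral form with the transition matrix $\Theta$ of $A$. By Proposition~\ref{prop:insensitivity_IC}'s computation, $\|\Theta(t,s)\|\le e^{-\lambda(t-s)}$. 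Hence
\[
\lambda_{\max}(\bar P(t))\le e^{-2\lambda t}\lambda_{\max}(P_0)+\lambda_{\max}(Q)\int_0^t e^{-2\lambda(t-s)}ds,
\]
which is exactly the claimed bound. Since $0\prec P\preceq\bar P$ is bounded on every finite interval, the solution cannot blow up, so it is global.

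\emph{Lower bound.} Let $S=P^{-1}$. Then
\[
\dot S = -SA - A^\top S - SQS + \Phi^\top R^{-1}\Phi .
\]
For a unit $x$, let $s(t)=\lambda_{\max}(S(t))$. I will use a Danskin/upper Dini derivative argument at the maximizing eigenvector $x$, where $Sx=sx$:
\[
x^\top\dot S x = -2s\,x^\top A x - s^2 x^\top Q x + x^\top\Phi^\top R^{-1}\Phi x \le 2\mu s - \lambda_{\min}(Q)s^2 + \beta ,
\]
with $\beta\coloneqq \sup\|\Phi\|^2/\lambda_{\min}(R)$. Here I used $x^\top A x\ge-\mu$. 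The scalar Riccati comparison $\dot s\le \beta+2\mu s-q s^2$, with $q=\lambda_{\min}(Q)>0$, gives
\[
s(t)\le \max\{s(0),s^\ast\},\qquad s^\ast = \frac{\mu+\sqrt{\mu^2+q\beta}}{q},
\]
the positive root. Therefore
\[
p_* = \min\Bigl\{\lambda_{\min}(P_0),\; \frac{q}{\mu+\sqrt{\mu^2+q\beta}}\Bigr\}.
\]

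The main obstacle is making the eigenvalue-derivative comparison rigorous, since $\lambda_{\max}(S)$ is only Lipschitz, not differentiable. I would handle it either with the upper Dini derivative and the standard comparison lemma, or by the invariance argument that $S\preceq \max\{s(0),s^\ast\}I$ cannot be first violated. The second works because at a first touching time, the quadratic form along the touching eigenvector has negative derivative when $s$ exceeds $s^\ast$.
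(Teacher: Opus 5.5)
Your proof is correct. For existence, positivity and the upper bound it uses the same idea as the paper: domination by the Lyapunov equation $\dot{\bar P}=A\bar P+\bar PA^\top+Q$, combined with $\|\Theta(t,s)\|\le e^{-\lambda(t-s)}$. The difference is that you prove the comparisons yourself. For the upper bound, $\bar P-P$ solves a Lyapunov equation with forcing $P\Phi^\top R^{-1}\Phi P\succeq 0$. For positivity, you absorb the quadratic term into a time-varying drift $F(t)$. The paper instead cites the Hermitian RDE comparison theorem of Abou-Kandil et al.\ (comparing with the $Q=0$ equation for positivity, and using Theorem~4.1.6 for global existence).

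The lower bound is where the routes genuinely differ. The paper stays in the $P$ variable and builds an auxiliary RDE, $\dot P_c=AP_c+P_cA^\top+\bigl(s_up_*^2I-p_*(A+A^\top)\bigr)-s_uP_c^2$, whose solution from $P_c(0)=p_*I$ is the constant $p_*I$. Its constant term is $\preceq Q$, its quadratic coefficient satisfies $s_uI\succeq\Phi^\top R^{-1}\Phi$, and $p_*I\preceq P_0$. The same comparison theorem then gives $P\succeq p_*I$ in one line, with no nonsmooth analysis.

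You instead pass to the information matrix $S=P^{-1}$. There $-SQS$ supplies the damping and $\Phi^\top R^{-1}\Phi$ is only a bounded forcing, so the problem reduces to a scalar Riccati inequality for $\lambda_{\max}(S)$. The cost is the Danskin/Dini or first-touching argument you flagged. In the touching version, use the level $\max\{s(0),s^*\}+\varepsilon$ and let $\varepsilon\to0$, since at $s=s^*$ your derivative bound is only $\le 0$, not $<0$.

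What your route buys:
\begin{itemize}
\item It is self-contained.
\item It produces the same constant, since $q/(\mu+\sqrt{\mu^2+q\beta})=(\sqrt{\mu^2+q\beta}-\mu)/\beta$ with $\beta=s_u$.
\item It does not need $\overline{\Phi}>0$.
\item Comparing with the scalar solution $\sigma(t)\to s^*$ also gives $\liminf_{t\to\infty}\lambda_{\min}(P(t))\ge 1/s^*$ independently of $P_0$, which the paper's constant-solution comparison does not state.
\end{itemize}
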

\begin{proof}
The positive definiteness claim follows from the comparison theorem for Hermitian RDE~\cite[Theorem~4.1.4]{AbFr2003} (compare to the same equation but with $Q=0$), and existence of $P$ on $[0,\infty)$ follows from Theorem~4.1.6.
    
For the upper bound, consider the linear equation $\dot {\tilde P} = A \tilde P + \tilde P A^\top + Q$ with the initial condition $\tilde P(0)=P_0$. Theorem~4.1.4 gives $P(t) \preceq \tilde P(t)$ for all $t \ge 0$. Since $\tilde P(t) = \Phi_A(t,0)P_0\Phi_A(t,0)^\top + \int_0^t \Phi_A (t,\tau) Q \Phi_A (t,\tau)^\top d\tau$ for all $ t \ge 0$, and $\|\Phi_A(t,t_0)\|\le e^{-\lambda (t- t_0)}$ for all $t \ge t_0 \ge 0 $, we get $\lambda_{\max}(\tilde P(t)) = \|\tilde P (t)\| \le e^{-2\lambda t}\lambda_{\max}(P_0) + \frac{\lambda_{\max}(Q)}{2\lambda}(1-e^{-2\lambda  t})$. The claim follows from the Weyl monotonicity principle.

For the lower bound, we have $P(t) \succeq p_* I$ with $p_* = \frac{-\mu + \sqrt{\mu^2 + s_u \lambda_{\min}(Q)} } {s_u} \wedge \lambda_{\min} (P_0)$, where $s_u \coloneqq \frac{\overline\Phi ^2}{\lambda_{\min} (R)}$ and $\overline\Phi>0$ is an upper bound for $\Phi$; this can be proved by comparing the original equation to a related one: $\dot P_c = A P_c + P_c A^\top  + (s_u p_*^2 I  - p_*A - p_* A^\top) - P_c (s_u I) P_c$, $P_c(0) = p_* I$ and noting that $Q \succeq s_u p_*^2 I - p_*A - p_*A^\top$, $\Phi^\top R^{-1}\Phi \preceq s_u I$.
\end{proof}

\subsection{Composite Adaptive Tracking Controller and Analysis} 
\label{subsec:controllers}
While Lemma \ref{lemma:bounded_cov} is of independent interest, its main role in this paper is to establish exponential tracking results for various composite adaptive control strategies based on the Kalman--Bucy filter.

First, we consider a tracking problem for~\eqref{eq:second_order} with a twice continuously differentiable reference trajectory $q_d$. We use Kalman--Bucy-inspired composite adaptation equations of the form:
\begin{subequations} \label{eq:KalmanComposite}
    \begin{align}
        \begin{split}
            \dot {\hat {\mathbf{a}}} &= A(\phi,u) \hat {\mathbf{a}}+ b(\phi,u) \\&+ P \Phi (\phi)^\top R^{-1} \left(y - \Phi(\phi) \hat{\mathbf{a}}-\psi(\phi)\right )+P \Phi(\phi) ^\top s,
        \end{split}\\
        &\dot P = AP + PA^\top + Q -P\Phi^\top R^{-1} \Phi P
    \end{align}
\end{subequations}
where $s \coloneqq \dot{\tilde q} + \Lambda \tilde q$, $\tilde q\coloneqq q-q_d$, and $P(0)=P_0$, $Q$, $R$, $\Lambda \succ 0$. For the main theorems, we assume $d(t) = \Phi(\phi(t)) \mathbf{a}(t) +  \psi(\phi(t)) + r(t)$, so that the disturbance $d$ is realized by continuous signals $\mathbf{a}(t)$ and $r(t)$ with $\mathbf{a}$ being differentiable. Note that \eqref{eq:KalmanComposite} are the standard Kalman--Bucy equations but with $P\Phi^\top s$, an additional composite adaptation term. Note that when the Kalman covariance $P$ becomes larger, the tracking error term $P \Phi(\phi)^\top s$ also becomes larger (along with the innovation term).

\begin{remark}
In~\eqref{eq:KalmanComposite}, $A=A(\phi,u)$, $b=b(\phi,u)$, $\Phi=\Phi(\phi)$, $\psi=\psi(\phi)$ are dependent on the measured features and/or control, and therefore are implicitly time-varying. Previous last-layer adaptive disturbance rejection methods \cite{OCSh2022, LuXi2025} may be interpreted as the case of $A=-\lambda I$, $b = 0$, $\psi = 0$ with hyperparameter $\lambda>0$, although their learning procedures differ from the procedure used here.
\end{remark}

\begin{remark}
\label{remark:data_driven_covariances}
The Kalman-smoother approach to representation learning (Section~\ref{subsec:process_noise_kalman}) provides data-driven estimates of $Q$ and $R$ that serve as principled starting points for subsequent gain tuning.
\end{remark}

Now we present the main theorems demonstrating exponential tracking convergence via the adaptation dynamics \eqref{eq:KalmanComposite}; here, \emph{exponential convergence to a bounded ball} means that there exist constants $c_i>0$ such that each solution on $\mathbb{R}_{\ge 0}$ of the closed-loop system satisfies $\| q(t)- q_d(t)\| \le c_1 e^{-c_2 t}+ c_3$ (or $\|v(t)-v_d(t)\| \le c_1 e^{-c_2 t}+ c_3$ for \eqref{eq:first_order}), where $c_2$ and $c_3$ are independent of $\hat{\mathbf{a}}(0)$, $\mathbf{a}(0)$, and $q(0)$, $\dot q (0)$ (or $v(0)$ for \eqref{eq:first_order}). As expected, the theorems will depend on the bounds on $A \mathbf{a} + b - \dot{\mathbf{a}}$ and $r$, i.e., the quality of the dynamical representation \eqref{eq:latent_model}.

\begin{theorem}[Exp. Tracking: 2nd-order mechanical systems] \label{thm:2ndorder}
Consider the system \eqref{eq:second_order}, \eqref{eq:KalmanComposite} with the feedback controller
\begin{equation} \label{eq:feedback_2ndorder}
    u = M \dot{v}_r + C v_r + g - Ks - \Phi \hat {\mathbf{a}} - \psi,
\end{equation}
where $K \succ 0$ and $v_r \coloneqq \dot q_d - \Lambda \tilde q$. If $-\mu I \preceq \frac 1 2 \left( A+A^\top \right) \preceq - \lambda I$, $\underline{m} I \preceq M \preceq \overline{m} I$ for some positive constants $\mu$, $\lambda$, $\underline{m}$, $\overline{m}$, and if $\left(A\mathbf{a}+b - \dot{\mathbf{a}} \right)$, $\Phi$, $r$, $\epsilon$ are bounded by some constants independent of $\hat{\mathbf{a}}(0)$, $\mathbf{a}(0)$, $q(0)$, $\dot q (0)$, then the feedback controller \eqref{eq:feedback_2ndorder} achieves exponential convergence to a bounded ball.
\end{theorem}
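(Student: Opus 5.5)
The plan is a composite Lyapunov argument in the spirit of Slotine--Li composite adaptation, with the Kalman covariance $P$ supplying a time-varying metric on the latent estimation error.

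\textbf{Error dynamics.} Let $\tilde{\mathbf a} \coloneqq \hat{\mathbf a} - \mathbf a$. Substituting \eqref{eq:feedback_2ndorder} into \eqref{eq:second_order} and writing $d = \Phi\mathbf a + \psi + r$ gives
\begin{equation*}
M\dot s + Cs + Ks = -\Phi\tilde{\mathbf a} + r .
\end{equation*}
From \eqref{eq:KalmanComposite}, with $y = \Phi\mathbf a + \psi + r + \epsilon$ and $w \coloneqq A\mathbf a + b - \dot{\mathbf a}$, we obtain
\begin{equation*}
\dot{\tilde{\mathbf a}} = \left(A - P\Phi^\top R^{-1}\Phi\right)\tilde{\mathbf a} + w + P\Phi^\top R^{-1}(r+\epsilon) + P\Phi^\top s .
\end{equation*}
The sign convention of the composite term is meant to cancel the cross term $-s^\top\Phi\tilde{\mathbf a}$; if the signs turn out not to match, I would flip the convention for $\tilde{\mathbf a}$ or $s$ accordingly.

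\textbf{Lyapunov function.} I take
\begin{equation*}
V = \tfrac12 s^\top M s + \tfrac12 \tilde{\mathbf a}^\top P^{-1}\tilde{\mathbf a}.
\end{equation*}
By Lemma~\ref{lemma:bounded_cov}, $p_* I \preceq P \preceq \bar p I$, where $\bar p = \max(\lambda_{\max}(P_0), \lambda_{\max}(Q)/2\lambda)$ and $p_*$ uses $\mu$ and the bound on $\Phi$. Hence $V$ is uniformly equivalent to $\|s\|^2 + \|\tilde{\mathbf a}\|^2$, with constants depending on $\underline m,\overline m,p_*,\bar p$ but not on initial conditions. Differentiating, skew-symmetry of $\dot M - 2C$ reduces the first part to $-s^\top K s - s^\top\Phi\tilde{\mathbf a} + s^\top r$. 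For the second part I use $\frac{d}{dt}P^{-1} = -P^{-1}\dot P P^{-1}$ together with the Riccati equation. The $A$-terms then cancel and I get
\begin{equation*}
-\tfrac12\tilde{\mathbf a}^\top\left(P^{-1}QP^{-1} + \Phi^\top R^{-1}\Phi\right)\tilde{\mathbf a} + \tilde{\mathbf a}^\top\Phi^\top s + \tilde{\mathbf a}^\top P^{-1}w + \tilde{\mathbf a}^\top\Phi^\top R^{-1}(r+\epsilon).
\end{equation*}
The cross terms $\pm s^\top\Phi\tilde{\mathbf a}$ cancel. Since $Q \succ 0$ (needed anyway for $p_*>0$) and $P \preceq \bar p I$, the first term is at most $-\frac{\lambda_{\min}(Q)}{2\bar p^2}\|\tilde{\mathbf a}\|^2$.

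\textbf{Closing the estimate.} With the bounds on $w,r,\epsilon,\Phi$ and $P^{-1}\preceq p_*^{-1}I$, I get
\begin{equation*}
\dot V \le -\alpha\left(\|s\|^2 + \|\tilde{\mathbf a}\|^2\right) + \beta\left(\|s\| + \|\tilde{\mathbf a}\|\right) \le -2\gamma V + \delta
\end{equation*}
after Young's inequality. The comparison lemma then gives $\sqrt{V(t)} \le e^{-\gamma t}\sqrt{V(0)} + \sqrt{\delta/2\gamma}$, so $\|s\|$ converges exponentially to a ball whose rate and radius do not depend on initial data. Finally, $\dot{\tilde q} = -\Lambda\tilde q + s$ is exponentially stable, so convolving it with the bound on $s$ transfers the same type of bound to $\tilde q$ (the rate becomes the minimum of $\gamma$ and $\lambda_{\min}(\Lambda)$).

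\textbf{Main obstacles.} The hardest step is the uniform lower bound on $P$: it is what makes $P^{-1}$ a valid bounded metric and what turns the Riccati terms into strict decrease. Lemma~\ref{lemma:bounded_cov} supplies this bound given $\mu$, bounded $\Phi$ and $Q\succ 0$. A secondary point is checking existence of solutions on $\mathbb{R}_{\ge 0}$; this follows because $V$ stays bounded, which rules out finite escape.
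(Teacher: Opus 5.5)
Your proposal is correct and follows the paper's proof. It uses the same error dynamics and the same composite Lyapunov function $s^\top M s + \tilde{\mathbf a}^\top P^{-1}\tilde{\mathbf a}$ (up to your factor $\tfrac12$), with the cross terms cancelled by $P\Phi^\top s$, two-sided bounds on $P$ from Lemma~\ref{lemma:bounded_cov}, and the final transfer through $\dot{\tilde q} = -\Lambda\tilde q + s$. The differences are only technical: you absorb the forcing with Young's inequality, whereas the paper writes the dissipation as $Q \succeq 2\alpha P$, applies Cauchy--Schwarz, and works with $\sqrt{V+c}$. One small fix: in the last step take the rate strictly below $\lambda_{\min}(\Lambda)$ (e.g.\ shrink $\gamma$, as the paper does with $\alpha$), since if $\gamma = \lambda_{\min}(\Lambda)$ the convolution produces a $t e^{-\gamma t}$ term.
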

\begin{proof}
First, note that $\dot s = \ddot{q} - \dot v_r$ and therefore $M \dot s = -(K+C) s - \Phi \tilde {\mathbf{a}} + r$, where $\tilde{\mathbf{a}}\coloneqq \hat{\mathbf{a}}-\mathbf{a}$. Moreover, we have $P^{-1}\dot{\tilde{\mathbf{a}}}= \Phi^\top s + (P^{-1} A - \Phi^\top R^{-1} \Phi) \tilde {\mathbf{a}} + P^{-1}(A\mathbf{a} + b - \dot {\mathbf{a}}) + \Phi^\top R^{-1} (r+\epsilon)$.
Defining $V \coloneqq s^\top M s+ \tilde{\mathbf{a}}^\top P^{-1} \tilde{\mathbf{a}}$, we have $\dot V =-2s^\top K s -\tilde {\mathbf{a}}^\top \left( P^{-1}QP^{-1} + \Phi^\top R^{-1}\Phi \right)\tilde{\mathbf{a}} + 2 s^\top r + 2 \tilde {\mathbf{a}}^\top P^{-1}(A\mathbf{a} + b - \dot{\mathbf{a}}) + 2\tilde{\mathbf{a}}^\top\Phi^\top R^{-1} (r+\epsilon)$.

By Lemma~\ref{lemma:bounded_cov}, for a sufficiently small $\alpha>0$, we have $K \succeq \alpha M$ and $Q \succeq 2 \alpha P$ for all $t \ge 0$, which implies $ -2K \preceq - 2 \alpha M$ and $- P^{-1}Q P^{-1}- \Phi^\top R^{-1}\Phi \preceq - 2\alpha P^{-1}$. Therefore, we conclude that $\dot V \le -2 \alpha V + 2 s^\top r + 2 \tilde {\mathbf{a}}^\top P^{-1}(A\mathbf{a} + b - \dot{\mathbf{a}}) + 2\tilde{\mathbf{a}}^\top \Phi^\top R^{-1} (r+\epsilon)$. 
        
Applying Cauchy--Schwarz gives
\begin{equation} \label{eq:Vdot_ub}
    \begin{split}
        \dot V \le &- 2 \alpha V + 2 \sqrt{\frac{V}{\lambda_{\max}(P)^{-1} \wedge \underline{m}}} \\
        &\times \left\| \begin{bmatrix} r \\P^{-1}(A\mathbf{a} + b - \dot {\mathbf{a}} ) + \Phi^\top R^{-1} (r+\epsilon) \end{bmatrix}\right\|.
    \end{split}
\end{equation}
Lemma~\ref{lemma:bounded_cov} with the assumptions of this theorem shows that the norm in \eqref{eq:Vdot_ub} is bounded by a constant, say $\overline{d}\ge 0$. For each small $c>0$, define $W_c = \sqrt{V+c} $. We have $\dot W_c \le -\alpha W_c + \alpha \sqrt{c} + \frac{\overline{d}}{c^*}$ where $c^* \coloneqq \left(\underline{m} \wedge \left( \|P_0\|+\|Q\|/(2\lambda) \right)^{-1}  \right)^{1/2}$. 
        
\raisebox{-2pt}{Differentiating $G_c(t)\coloneqq e^{\alpha t} \left(W_c - \sqrt c - \overline{d}/(c^* \alpha)\right)$ and} passing to the limit $c \to 0+$ gives:
\begin{equation}
    \sqrt{\underline{m}} \|s(t)\| \le \sqrt{V (t)} \le e^{-\alpha t} \left( \sqrt{V(0)} - \frac{\overline{d}}{c^* \alpha}\right) +\frac{\overline{d}}{c^* \alpha}  .
\end{equation}

By solving the linear system $\dot{\tilde{q}} = -\Lambda \tilde{q} + s$ in $\tilde q$, using $\|\exp(-t\Lambda ) \|= \exp\left(-t\lambda_{\min} (\Lambda)\right)$ for all $ t \ge 0$, and choosing $\alpha <\lambda_{\min}(\Lambda)$ if necessary, we conclude that $\| \tilde q\| \le c_1 e^{-c_2 t}+ c_3$ for some $c_i >0$ with $c_2$, $c_3$ being independent of $\hat{\mathbf{a}}(0)$, $\mathbf{a}(0)$, $q(0)$, and $\dot q (0)$.
\end{proof}

We also present a simpler analogue of the previous theorem in the context of \eqref{eq:first_order}.
\begin{theorem}[Exp. Tracking: 1st-order systems]
\label{thm:1storder}
Consider the system \eqref{eq:first_order}, \eqref{eq:KalmanComposite} with the feedback controller
\begin{equation}
\label{eq:feedback_1storder}
    u = \dot v_d - f(v)  -Ks - \Phi \hat {\mathbf{a}} - \psi ,
\end{equation}
where $K \succ 0$ and now $s \coloneqq v- v_d$ with a continuously differentiable reference velocity $v_d$. If $-\mu I \preceq \frac 1 2 \left( A+A^\top \right) \preceq - \lambda I$ for some positive constants $\mu$, $\lambda$, and if $\left(A\mathbf{a}+b - \dot{\mathbf{a}} \right)$, $\Phi$, $r$, $\epsilon$ are bounded by some constants independent of $\hat{\mathbf{a}}(0)$, $\mathbf{a}(0)$, $v(0)$, then the feedback controller \eqref{eq:feedback_1storder} achieves exponential convergence to a bounded ball.
\end{theorem}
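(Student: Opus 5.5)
The plan is to mirror the proof of Theorem~\ref{thm:2ndorder}, replacing the mechanical composite variable by $s = v - v_d$ and the inertia-weighted term by $\|s\|^2$.

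\emph{Error dynamics.} Substituting \eqref{eq:feedback_1storder} into \eqref{eq:first_order} and using $d = \Phi\mathbf{a} + \psi + r$ gives
\[
\dot s = \dot v - \dot v_d = -Ks - \Phi\tilde{\mathbf{a}} + r, \qquad \tilde{\mathbf{a}} \coloneqq \hat{\mathbf{a}} - \mathbf{a}.
\]
Since $y = \Phi\mathbf{a} + \psi + r + \epsilon$, the adaptation law \eqref{eq:KalmanComposite} yields, exactly as before,
\[
\dot{\tilde{\mathbf{a}}} = (A - P\Phi^\top R^{-1}\Phi)\tilde{\mathbf{a}} + P\Phi^\top s + (A\mathbf{a} + b - \dot{\mathbf{a}}) + P\Phi^\top R^{-1}(r + \epsilon).
\]
Here I take the residual sign so that it matches the one used in Theorem~\ref{thm:2ndorder}.

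\emph{Lyapunov function.} Take $V \coloneqq s^\top s + \tilde{\mathbf{a}}^\top P^{-1}\tilde{\mathbf{a}}$ and use $\tfrac{d}{dt}P^{-1} = -P^{-1}\dot P P^{-1}$. The Riccati terms combine as
\[
P^{-1}A + A^\top P^{-1} - 2\Phi^\top R^{-1}\Phi - P^{-1}\dot P P^{-1} = -P^{-1}QP^{-1} - \Phi^\top R^{-1}\Phi.
\]
The cross terms $-2s^\top\Phi\tilde{\mathbf{a}}$ and $+2\tilde{\mathbf{a}}^\top\Phi^\top s$ cancel; this is the purpose of the composite term $P\Phi^\top s$. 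Hence
\[
\dot V = -2s^\top K s - \tilde{\mathbf{a}}^\top\bigl(P^{-1}QP^{-1} + \Phi^\top R^{-1}\Phi\bigr)\tilde{\mathbf{a}} + 2s^\top r + 2\tilde{\mathbf{a}}^\top P^{-1}(A\mathbf{a} + b - \dot{\mathbf{a}}) + 2\tilde{\mathbf{a}}^\top\Phi^\top R^{-1}(r + \epsilon).
\]

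\emph{Exponential decay rate.} This is the step I expect to be the main obstacle, because it needs uniform bounds on $P$. I would invoke Lemma~\ref{lemma:bounded_cov}, which requires $Q \succ 0$ (implicitly as in Theorem~\ref{thm:2ndorder}), bounded $\Phi$, and $-\mu I \preceq \tfrac12(A + A^\top)$.
\begin{itemize}
\item The lemma gives $P(t) \succeq p_* I$, so $P^{-1}$ stays bounded.
\item It also gives $\|P(t)\| \le \|P_0\| + \|Q\|/(2\lambda)$ uniformly in $t$.
\end{itemize}
From the upper bound, choose $\alpha > 0$ small enough that $K \succeq \alpha I$ and $Q \succeq 2\alpha P$ for all $t$. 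Then $P^{-1}QP^{-1} \succeq 2\alpha P^{-1}$, so
\[
\dot V \le -2\alpha V + 2\sqrt{\frac{V}{1 \wedge \lambda_{\max}(P)^{-1}}}\;\bar d .
\]
Here $\bar d$ bounds $\bigl\| [\,r;\; P^{-1}(A\mathbf{a} + b - \dot{\mathbf{a}}) + \Phi^\top R^{-1}(r + \epsilon)\,] \bigr\|$. It is finite because of the lower bound on $P$ and the assumed bounds on $A\mathbf{a} + b - \dot{\mathbf{a}}$, $\Phi$, $r$, $\epsilon$. None of these constants depend on $\hat{\mathbf{a}}(0)$, $\mathbf{a}(0)$ or $v(0)$: $P$ evolves independently of them given $\phi$ and $u$, and the lemma's bounds are explicit in $P_0$, $Q$, $R$, $\lambda$, $\mu$ and $\overline\Phi$.

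\emph{Conclusion.} Repeat the comparison argument with $W_c = \sqrt{V + c}$, which gives $\dot W_c \le -\alpha W_c + \alpha\sqrt c + \bar d/c^*$ with $c^* = \bigl(1 \wedge (\|P_0\| + \|Q\|/(2\lambda))^{-1}\bigr)^{1/2}$. Integrating $e^{\alpha t}(W_c - \sqrt c - \bar d/(c^*\alpha))$ and letting $c \to 0^+$ yields
\[
\|v(t) - v_d(t)\| = \|s(t)\| \le \sqrt{V(t)} \le e^{-\alpha t}\sqrt{V(0)} + \frac{\bar d}{c^*\alpha}.
\]
This is exponential convergence to a bounded ball with $c_1 = \sqrt{V(0)}$, $c_2 = \alpha$ and $c_3 = \bar d/(c^*\alpha)$, where $c_2$ and $c_3$ are independent of the initial conditions. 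No $\Lambda$-filtering step is needed, since $s$ is itself the tracking error.
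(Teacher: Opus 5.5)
Your proposal is correct and follows the paper's proof step for step: the same error dynamics $\dot s = -Ks - \Phi\tilde{\mathbf a} + r$, the same Lyapunov function $V = s^\top s + \tilde{\mathbf a}^\top P^{-1}\tilde{\mathbf a}$ in which the composite term cancels the cross terms, the same choice $K \succeq \alpha I$, $Q \succeq 2\alpha P$ justified by Lemma~\ref{lemma:bounded_cov}, and the same $W_c$ comparison argument with the same $c^*$ (your final bound only drops the harmless $-\overline{d}e^{-\alpha t}/(c^*\alpha)$ term). One small wording point: $\alpha$ and $\overline{d}$ are independent of the initial conditions because the lemma's bounds are uniform, not because $P$ itself is independent of the initial state, since $\phi$ and $u$ are closed-loop signals; you do also cite the uniform bounds, which is the correct reason.
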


\begin{proof}
We have $\dot s = -Ks - \Phi \tilde{\mathbf{a}} + r$ where $\tilde{\mathbf{a}} \coloneqq \hat{\mathbf{a}} - \mathbf{a}$. Considering $V\coloneqq s^\top s + \tilde {\mathbf{a}} ^\top P^{-1} \tilde {\mathbf{a}}$, we again have $\dot V =-2s^\top K s -\tilde {\mathbf{a}}^\top \left( P^{-1}QP^{-1} + \Phi^\top R^{-1}\Phi \right)\tilde{\mathbf{a}} + 2 s^\top r + 2 \tilde {\mathbf{a}}^\top P^{-1}(A\mathbf{a} + b - \dot{\mathbf{a}}) + 2\tilde{\mathbf{a}}^\top\Phi^\top R^{-1} (r+\epsilon)$. The rest of the proof is similar to the proof of Theorem \ref{thm:2ndorder} once one notes that we have $K \succeq \alpha I$ and $Q \succeq 2 \alpha P$ for all small $\alpha>0$, which leads to
\begin{equation}
    \|s(t)\| \le \frac{\overline {d}}{c^* \alpha} + \left(\sqrt{V(0)} - \frac{\overline {d}}{c^*\alpha}\right) e^{-\alpha t},
\end{equation}
where $c^* = \left(1 \wedge \left( \|P_0\|+\| Q\|/(2\lambda)\right)^{-1} \right)^{1/2}$ and $\overline{d}$ is defined as in the proof of Theorem \ref{thm:2ndorder}.
\end{proof}

\section{NEURAL NETWORK ARCHITECTURE AND ITS TRAINING}
\label{sec:nn_arch_training}
In this section, we present a \emph{statistically principled} training procedure for a \emph{uniformly contractive} disturbance predictor whose latent dynamics are governed by feature-dependent matrix coefficients.

\subsection{Contractive Network Parametrization}
All maps are parameterized by GELU MLPs, with spectral normalization applied to the linear layers. To ensure \emph{contraction} of the latent dynamics \eqref{eq:latent_dyn}, we parameterize $A(\phi,u)=S(\phi,u)+K(\phi,u)$ with $S(\phi,u)=-(\beta+\mu(\phi,u))I - L(\phi,u)L(\phi,u)^\top$ and $K(\phi,u) = \frac{1}{2} \left(M(\phi,u)-M(\phi,u)^\top\right)$, with $\beta>0$ fixed. We note that this contractive-by-construction parameterization is inspired by \cite{beikmohammadi2024neuralcontractivedynamicalsystems}. The scalar $\mu(\phi,u)\ge 0$ and the matrices $L(\phi,u)$, $M(\phi,u)$ are produced by MLPs and bounded element-wise via $\sigma(\cdot)$ and $\tanh(\cdot)$, respectively; $L(\phi,u)$ is lower-triangular and $M(\phi,u)$ is strictly lower-triangular. Since $K$ is skew-symmetric, it does not affect the symmetric part of $A$, and we have the following uniform contraction condition:
\begin{equation}
\left(A(\phi,u)+A(\phi,u)^\top\right)/2 =S(\phi,u)\preceq -\beta I,
\end{equation}
therefore the latent dynamics are contracting. $b(\phi,u)$, $\Phi(\phi)$, and $\psi(\phi)$ are parameterized by MLPs; $\Phi$ is bounded via $\tanh(\cdot)$. One may optionally enforce anchoring constraints such as $b(0,0)=0$ or $\psi(0)=0$ by subtracting the value at the origin.
\begin{algorithm}[h]
\caption{Alternating training (hard-EM)}
\label{alg:alt_train}
\begin{algorithmic}[1]
\Require initial $(\theta,R_d,Q_d)$, choice of $\lambda_a$, data windows
\While{stopping criterion is not met}
    \State sample mini-batch $\mathcal B$
    \ForAll{$w\in\mathcal B$}
        \State instantiate the model for window $w$ from $\theta$
        \State apply Kalman smoothing to infer $a_{0:L-1}$
    \EndFor
    \State update $R_d$ (from output residuals; sparse/EMA)
    \State update $Q_d$ (from latent residuals; sparse/EMA)$^*$
    \State update $\theta$ by AdamW with the inferred latents fixed
\EndWhile
\end{algorithmic}
{\footnotesize $^*$optional}
\end{algorithm}

\subsection{Training via Hard Expectation--Maximization}
\label{subsec:hardem}
We train the architecture from data windows $\{(\phi_k,u_k,y_k)\}_{k=0}^{L-1}$ sampled at time step $\Delta t$; windows are extracted from trajectories with a fixed stride and may overlap.

A key challenge in fitting \eqref{eq:latent_model} from data windows is that the latent initial condition $\mathbf a_0=\mathbf a(0)$ is \emph{unobserved}, \emph{window-specific}, and has no direct physical meaning. As a result, $\mathbf a_0$ is a missing value that must be inferred jointly with the shared parameters, which can create an identifiability ambiguity between $\mathbf a_0$ and $\theta$. To regularize this inference problem in a principled way, we place a zero-mean Gaussian prior on the initial latent state, $\mathbf a_0 \sim \mathcal N(0,\lambda_a^{-1}I)$, which induces the quadratic penalty $\lambda_a\|\mathbf a_0\|^2$ in the training objective. Given parameters $\theta$, we roll out \eqref{eq:latent_dyn} by RK2 to obtain $\mathbf a_k$, form predictions $\hat d_k=\Phi_\theta(\phi_k)\mathbf a_k+\psi_\theta(\phi_k)$, and assume a Gaussian observation model $y_k \mid \mathbf a_0,\theta \sim \mathcal N(\hat d_k, R_d)$ with $R_d \succ 0$. This yields a Gaussian negative log-likelihood (NLL) loss
\begin{equation}
\label{eq:train_obj}
\begin{split}
\min_{\mathbf a_0,\theta,R_d}\;& \sum_{k=0}^{L-1}\Big((y_k-\hat d_k(\mathbf a_0,\theta))^\top R_d^{-1}(y_k-\hat d_k(\mathbf a_0,\theta)) \\
&\qquad\qquad\qquad +\log|R_d| \Big)+\lambda_a\|\mathbf a_0\|^2 ,
\end{split}
\end{equation}
implicitly averaged over a mini-batch of windows. We optimize \eqref{eq:train_obj} by an alternating scheme, which can be interpreted as a (generalized) hard-EM procedure \cite{CeGo1992}:
\emph{(hard E-step)} fix $\theta$ and $R_d$ and perform exact minimization with respect to $\mathbf a_0$ to obtain a MAP estimate for each window;
\emph{(M-step)} Fix the inferred $\mathbf a_0$ and update $R_d$ from mini-batch statistics; then use AdamW to update $\theta$. The quadratic prior term $\lambda_a\|\mathbf a_0\|^2$ regularizes the inferred initial condition and could be interpreted as mitigating the identifiability ambiguity.

For fixed $\theta$, the latent rollout $\{\mathbf{a}_k\}$ depends affinely on $\mathbf{a}_0$; consequently, $\hat d_k$ is affine in $\mathbf{a}_0$ and the Gaussian NLL induces a positive definite quadratic objective in $\mathbf{a}_0$. Thus, the $\mathbf{a}_0$ block can be optimized exactly, and this optimization can be viewed as a special case of Kalman smoothing; see Section~\ref{subsec:process_noise_kalman}. In view of \eqref{eq:train_obj}, it is natural to use $P_0 = \lambda_a^{-1} I$, $\hat {\mathbf{a}}(0)=0$, $R =\left(\Delta t \right) R_d $ and a small $Q = qI\succ0$ in~\eqref{eq:KalmanComposite}.

\subsection{Process Noise Modeling and the Kalman Smoother}
\label{subsec:process_noise_kalman}
The training scheme in the previous subsection corresponds to a \emph{deterministic} latent rollout within each window: once $(\theta,\mathbf a_0)$ are fixed, the entire latent trajectory $\{\mathbf a_k\}_{k=0}^{L-1}$ is fixed. We optionally add discrete-time process noise $Q_d\succ0$ by treating the \emph{entire} latent trajectory as missing data; Section~\ref{subsec:hardem} corresponds to the degenerate special case of $Q_d = 0$. Concretely, let $f_{\theta,k-1}(\mathbf a_{k-1})$ denote the one-step RK2 update of \eqref{eq:latent_dyn} using $(\phi_{k-1},u_{k-1})$ (from time $k-1$ to time $k$). Since \eqref{eq:latent_dyn} is affine in $\mathbf a$, $f_{\theta,k-1}(\mathbf a_{k-1})$ is affine in $\mathbf a_{k-1}$. We replace the deterministic update with
\begin{equation}
\mathbf a_k \mid \mathbf a_{k-1},\theta \sim \mathcal N\!\big(f_{\theta,k-1}(\mathbf a_{k-1}),\,Q_d\big),\qquad k=1,\dots,L-1,
\end{equation}
while keeping the same observation model for $y_k$ and the same prior $\mathbf a_0\sim\mathcal N(0,\lambda_a^{-1}I)$. Compared to \eqref{eq:train_obj}, we replace $\hat d_k(\mathbf a_0,\theta)$ by $\hat d_k(\mathbf a_k,\theta) \coloneqq \Phi_\theta(\phi_k)\mathbf a_k+\psi_\theta(\phi_k)$, and the window NLL loss acquires the additional process-noise term $(L-1)\log|Q_d| +\sum_{k=1}^{L-1}\big(\mathbf a_k-f_{\theta,k-1}(\mathbf a_{k-1})\big)^\top Q_d^{-1}\big(\mathbf a_k-f_{\theta,k-1}(\mathbf a_{k-1})\big)$.

For fixed $(\theta,R_d,Q_d)$, minimization over the latent trajectory $\mathbf a_{0:L-1}$ (the hard E-step) is exactly solvable via a Kalman smoother \cite{SaSt2023}, yielding an inferred latent trajectory for each window. The Kalman smoother is implemented by the RTS formulas. In the M-step, we fix this inferred trajectory and update $R_d$ and optionally $Q_d$ from mini-batch statistics, and then $\theta$ by AdamW (see Algorithm~\ref{alg:alt_train}).

The identified discrete-time covariances lead to the continuous-time gains in~\eqref{eq:KalmanComposite}: $Q \approx Q_d/\Delta t$ and $R \approx (\Delta t) R_d$, as noted in Remark~\ref{remark:data_driven_covariances}. In our implementations, we used diagonal $R_d$ and isotropic $Q_d=q_d I$ for simplicity.

\section{EXPERIMENTAL RESULTS}
\label{sec:exp_results}
Implementing the controllers~\eqref{eq:feedback_2ndorder} and \eqref{eq:feedback_1storder} requires discretizing \eqref{eq:KalmanComposite}.
We use a Joseph-form Kalman filter, with forward Euler discretization for the $P \Phi^\top s$ term. For both examples, we used $d_a = 3$. We compare our controller with \emph{(i)} a model-based PD control loop, \emph{(ii)} the fixed-decay ablations inspired by \cite{OCSh2022, LuXi2025}, and \emph{(iii)} a DAC controller in which an LTI system is identified by \cite[Sections~4.3.1, 4.4.2]{OvMo1996}, a variant of the \textsc{N4SID} algorithm~\cite{VANOVERSCHEE199475}. We denote the last controller as \textsc{N4SID}v-DAC. The fixed-decay ablations, denoted FixedDecay, were trained by the Kalman-smoother-based hard-EM procedure restricted to the function class $A = -\lambda I$, $b\equiv 0$, and $\psi \equiv 0$. For the adaptive controllers, datasets were split into training, validation, and test sets in a 70:15:15 ratio. For the \textsc{N4SID} variant, the number of block rows $i$ in the past and future Hankel matrices, following the notation of \cite{OvMo1996}, was selected to minimize the validation-set MSE. The LTI model was then re-estimated using the combined training and validation sets.

The LTI disturbance augmentations are of the form
\begin{subequations}
\begin{align}
    \dot{\mathbf a} &= A \mathbf a + B_1 \phi + B_2 u,\\
    \hat d &= C \mathbf a + D_1 \phi + D_2 u, \label{eq:lti_sysid_output}
\end{align}
\end{subequations}
with $d \approx \hat d$. For example, substituting \eqref{eq:lti_sysid_output} into \eqref{eq:GVR_Bot} yields $\dot v \approx A_v v + (B + D_2)u + D_1 \phi + C\mathbf a$. Therefore, considering \eqref{eq:feedback_1storder}, the disturbance-compensating command is obtained by solving the equation $(B+D_2)u+D_1 \phi + C\hat{\mathbf a}=\dot v_d-A_v v -Ks$ in $u$. Although the LTI model is presented in continuous-time notation, system identification was performed in discrete time.

\subsection{A Slippery Tracked Vehicle with a Partially Filled Liquid Tank and a Swinging Pendulum}

The presented architecture was evaluated on a tracked ground vehicle \censor{(``GVR-Bot'')} carrying a pendulum and a liquid tank filled with water to approximately 30\% capacity. The experiment also provides a test of generalization to an unseen intermediate fill level because the 30\% level was absent from the training data. The choice of a liquid-carrying platform was partly motivated by the growing interest in on-orbit refueling~\cite{RaAm2026}. To emulate low-traction, slippery terrain, the tracks were wrapped with low-friction tape. Water-tank level sensor measurements $w\in\mathbb{R}$ were sampled at 18 Hz. The platform had an NVIDIA Jetson AGX Orin as a companion computer, and used visual--inertial odometry (VIO) for localization. The control loop ran at 50 Hz.

The vehicle's velocity is modeled by the following first-order lag \cite{LuXi2025}:
\begin{equation}
\label{eq:GVR_Bot}
    \dot v = A_v v + B u + d, 
\end{equation}
where $v = [v_x, \omega_z]$ denotes the vehicle's forward velocity and yaw rate, $u = [v_{x, \text{cmd}}, \omega_{z,\text{cmd}}]$ denotes the commanded values, and $B = -A_v=\operatorname{diag}(1/0.2, 1/0.15)$, determined by system identification on hardware. The vehicle's internal PID controller is inaccessible, and only the velocity commands are available, which motivates the model \eqref{eq:GVR_Bot}. The liquid sloshing and pendulum motion are expected to contribute to time-varying changes in the vehicle's traction/slip behavior and, consequently, in the observed command-to-velocity response. Such unmodeled effects are lumped into $d$ in \eqref{eq:GVR_Bot}. 

\paragraph{Data Collection and Training}
Data were collected for about 11 minutes for each water-tank level ($\approx$ 60\%, 0\%) in an open testing area, with the vehicle driven at various speeds and in various directions without following a prescribed trajectory. The vehicle was driven through the full operational range of speeds and angular rates, with maneuvers such as abrupt starts and stops or smooth accelerations. We trained the neural network ($\approx 4.9\mathrm{k}$ trainable parameters) using $(v, w, u, y)$ data with $\Delta t = 0.01$. Here, $y$ is the estimated $d$ under the zero-order hold assumption, and we used $\phi = (v,w)$. For both the fixed-decay ablation and the proposed method, the neural networks were trained for 100 epochs using a learning rate of 0.001 and a weight decay of 0.01. The training took less than 3 minutes on an Apple M4 Pro laptop.

\paragraph{Predictions for Disturbance $d$} 
The predictive capability of the architecture was evaluated by rolling out the dynamics in~\eqref{eq:latent_dyn} and \eqref{eq:affine_readout} using the input, state, and water-level measurements from the test set. Figure~\ref{fig:prediction_gvr} illustrates the predicted and measured disturbances on two test-set trajectories.

\begin{figure}[!ht]
\vspace{-9pt}
\centering
\setlength{\tabcolsep}{1pt}
\renewcommand{\arraystretch}{0}
\begin{tabular}{@{}cc@{}}
\includegraphics[width=0.495\linewidth]{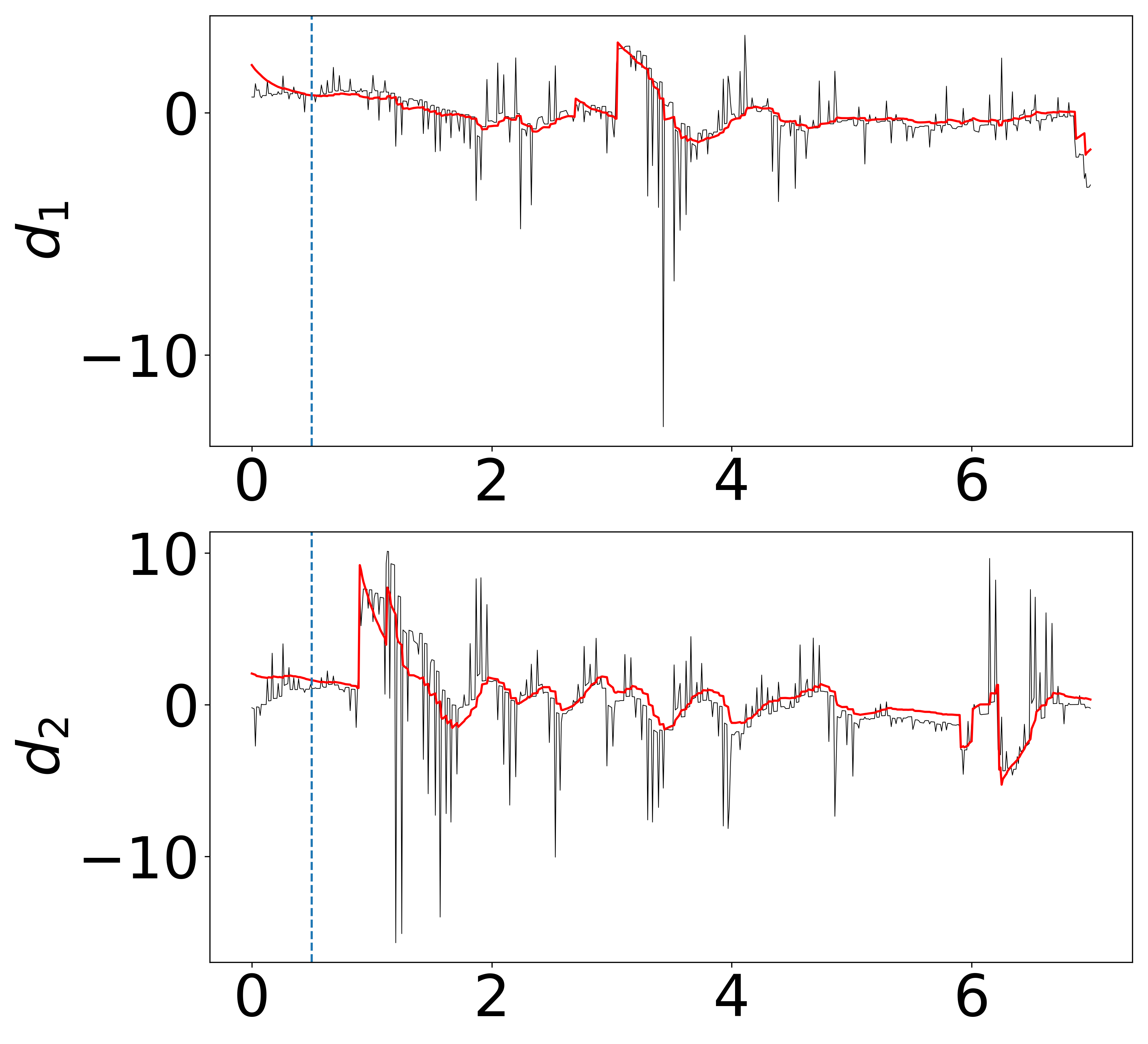} &
\includegraphics[width=0.495\linewidth]{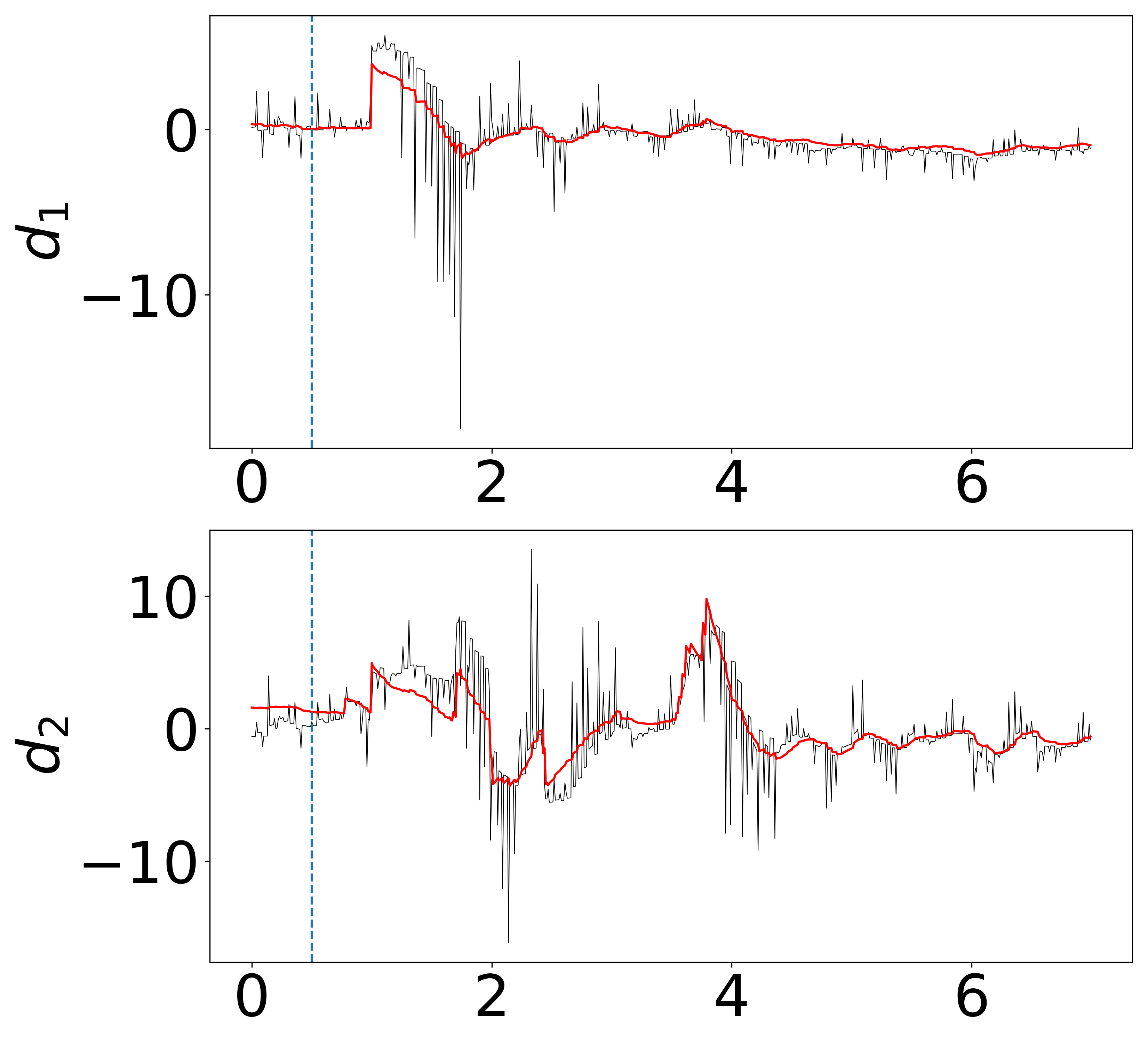}
\end{tabular}
\caption{(Tracked vehicle) Test-set rollouts for the predicted disturbance $\hat d \in \mathbb{R}^2$ as functions of time, on two different test-set trajectories. Black: disturbance proxy $y(t)$; red: rollout prediction $\hat d(t)$. The proxy $y$ is visibly noisier. The average prediction MSE was $3.53$; it was $3.49$ after discarding the initial half second for each trajectory as a burn-in period (vertical dashed line).}
\label{fig:prediction_gvr}
\vspace{-10pt}
\end{figure}
\paragraph{Tracking Controller Test}
We compare the proposed controller with three other controllers: our FixedDecay ablation, \textsc{N4SID}v-DAC, and the model-based PD loop used in~\cite[Eqs.~14--18]{LuXi2025} which contains nonlinearities. The vehicle's commanded trajectory consisted of two consecutive double lane changes, which excited the water and pendulum dynamics while inducing track slip.

The model-based PD baseline already provides strong overall tracking, particularly in the forward-velocity channel. The benefits of incorporating disturbance predictors are most apparent in yaw-rate regulation. As shown in Table~\ref{tab:gvr_tracking_results}, the proposed controller achieves the lowest angular-rate and velocity RMS errors and ties \textsc{N4SID}v-DAC for the lowest position RMS error. The competitive angular-rate and position tracking performance of \textsc{N4SID}v-DAC suggests that an LTI augmentation captures a significant part of the model mismatch over the tested regime. In contrast, the fixed-decay ablation performed less favorably in this experiment.

\begin{figure}[!h]
\centering
\setlength{\tabcolsep}{1pt}
\renewcommand{\arraystretch}{0}
\includegraphics[width=1\columnwidth]{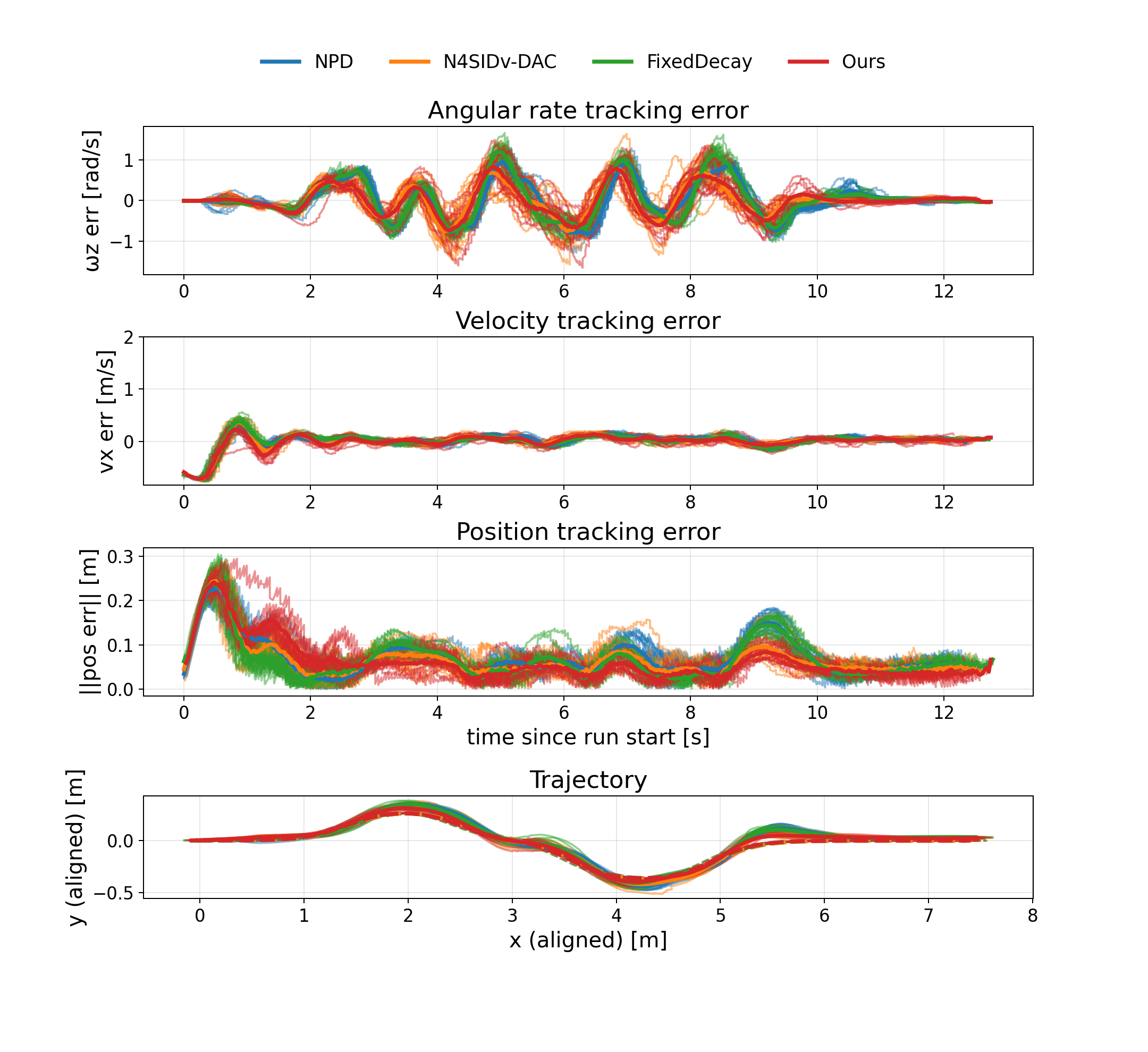}
\vspace{-40pt}    
\caption{(Tracked vehicle) Tracking runs under four controllers.}
\label{fig:tracking_gvr}
\vspace{-15pt}
\end{figure}

\begin{table}[!hb]
\vspace{-5pt}
\centering
\small
\setlength{\tabcolsep}{4pt}
\resizebox{\columnwidth}{!}{%
    \begin{tabular}{lcccc}
    \toprule
    Metric & NPD & \textsc{N4SID}v-DAC & FixedDecay & Ours \\
    \midrule
    Ang. rate error RMS [rad/s] & 0.449 & 0.414 & 0.468 & \textbf{0.407} \\
    Vel. error RMS [m/s] & 0.151 & 0.160 & 0.156 & \textbf{0.149} \\
    Pos. error RMS [m] & 0.085 & \textbf{0.078} & 0.081 & \textbf{0.078} \\
    \bottomrule
    \end{tabular}%
}
\caption{Tracked-vehicle RMS tracking errors.\\
{\normalfont NPD: nonlinear PD loop; \textsc{N4SID}v-DAC: LTI-based disturbance-accommodating control; FixedDecay: fixed-decay ablation.}}
\vspace{-20pt}
\label{tab:gvr_tracking_results}
\end{table}

\subsection{Coupled Duffing Oscillators: a Robustness Sweep}

We consider a pair of Duffing oscillators coupled by a nonlinear spring:
\begin{subequations}
\begin{align}
  m_1\ddot x_1 &= - k_1 x_1 - \alpha_1 x_1^3 +u +(d-c_1 \dot x_1),\\[-3pt]
  m_2\ddot x_2 &= - k_2 x_2 - \alpha_2 x_2^3-c_2 \dot x_2  - d,\\[-3pt]
  d &= k_c (x_2 - x_1) + \alpha_c (x_2 -x_1)^3,
\end{align}
\end{subequations}
where we only observe $x_1,\dot x_1,u$ and are unaware of the existence of the coupled $x_2$-system. We train the neural networks with a synthetic dataset of $(x_1, \dot x_1, u, y)$ ($\Delta t =0.01$). Here, the samples of $x_1$ and $\dot x_1$ are corrupted by zero-mean Gaussian noise with standard deviation $0.005$; $y$ is computed using forward differences of the sampled $\dot x_1$. The parameters used for the nonlinear oscillators were:
\begin{subequations}
\begin{align}
m_1 &= 1.0, & m_2 &= 0.7, \\[-3pt]
c_1 &= 0.30, & c_2 &= 0.25, \\[-3pt]
k_1 &= 1.0, & k_2 &= 0.8, & k_c &= 2.0, \\[-3pt]
\alpha_1 &= 1.7, & \alpha_2 &= 0.2, & \alpha_c &= 5.0.
\end{align}
\end{subequations}
Our empirical evaluation considers two complementary aspects: \emph{(a)} we show that \eqref{eq:latent_model} with $\phi=(x_1, \dot x_1)$ predicts $d$ (or more precisely, its proxy $y$) accurately on the test set, and \emph{(b)} we compare the controller \eqref{eq:feedback_2ndorder} with the baseline model-based PD ($u = m_1 \dot v_r + k_1 x_1 + \alpha_1 x_1^3 - Ks + c_1 \dot x_1$), \textsc{N4SID}v-DAC, and the fixed-decay representation ablation. 

\paragraph{Predictions for Disturbance $d$}
We rolled out~\eqref{eq:latent_model} starting from $\mathbf{a}_0 = 0$ to get the estimates $\hat d$, and compared them to the test-set labels $y$: see Fig.~\ref{fig:prediction_duffing}. Note that the initial transients from the fixed initial condition $\mathbf{a}_0 = 0$ decay as time passes, as expected from the contraction property (Proposition~\ref{prop:insensitivity_IC}). This predictive capability arises from generalizing prior ``fixed-decay'' representation-learning approaches to a more flexible learned dynamical model.

\begin{figure}[t!]
\vspace{10pt}
\centering
\setlength{\tabcolsep}{1pt}
\renewcommand{\arraystretch}{0}
\begin{tabular}{@{}cc@{}}
\includegraphics[width=0.495\linewidth]{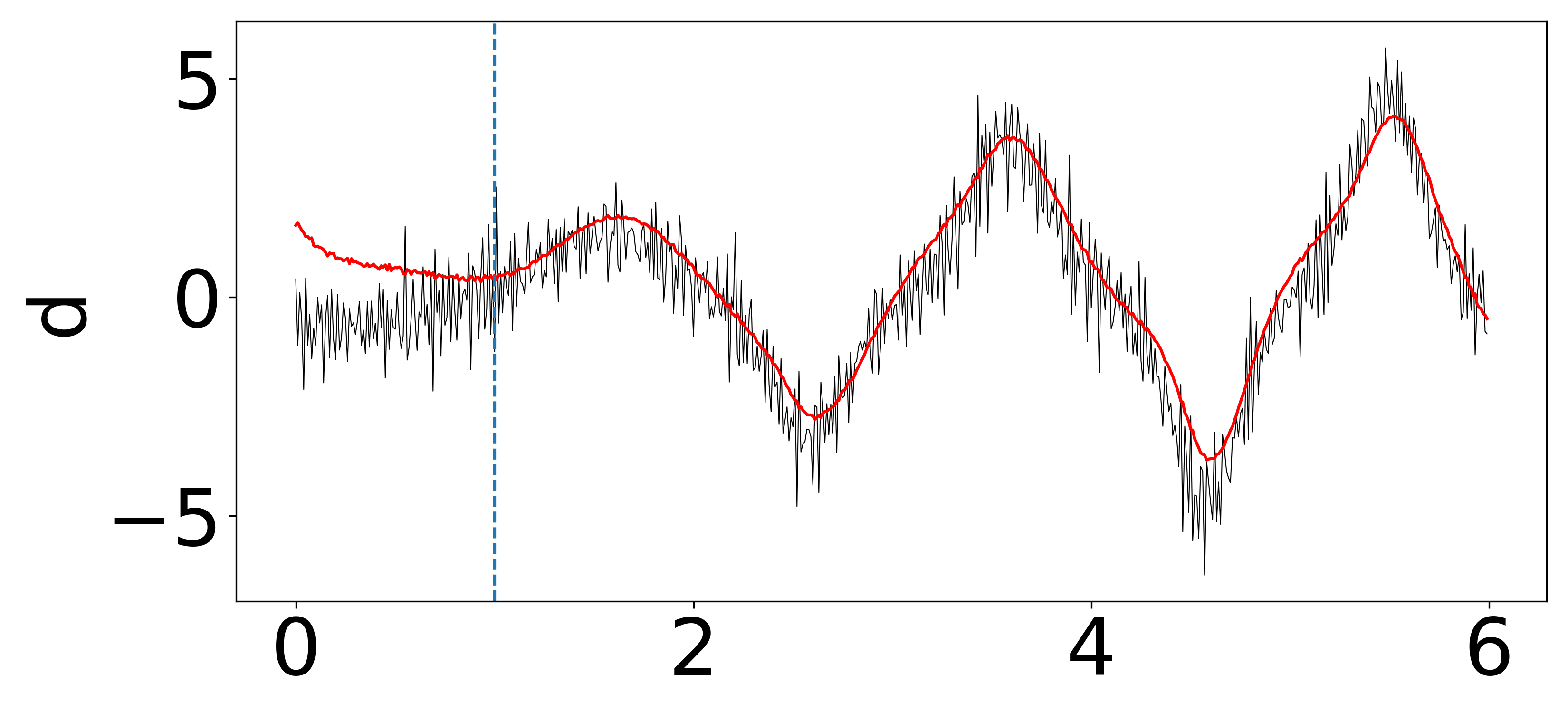} &
\includegraphics[width=0.495\linewidth]{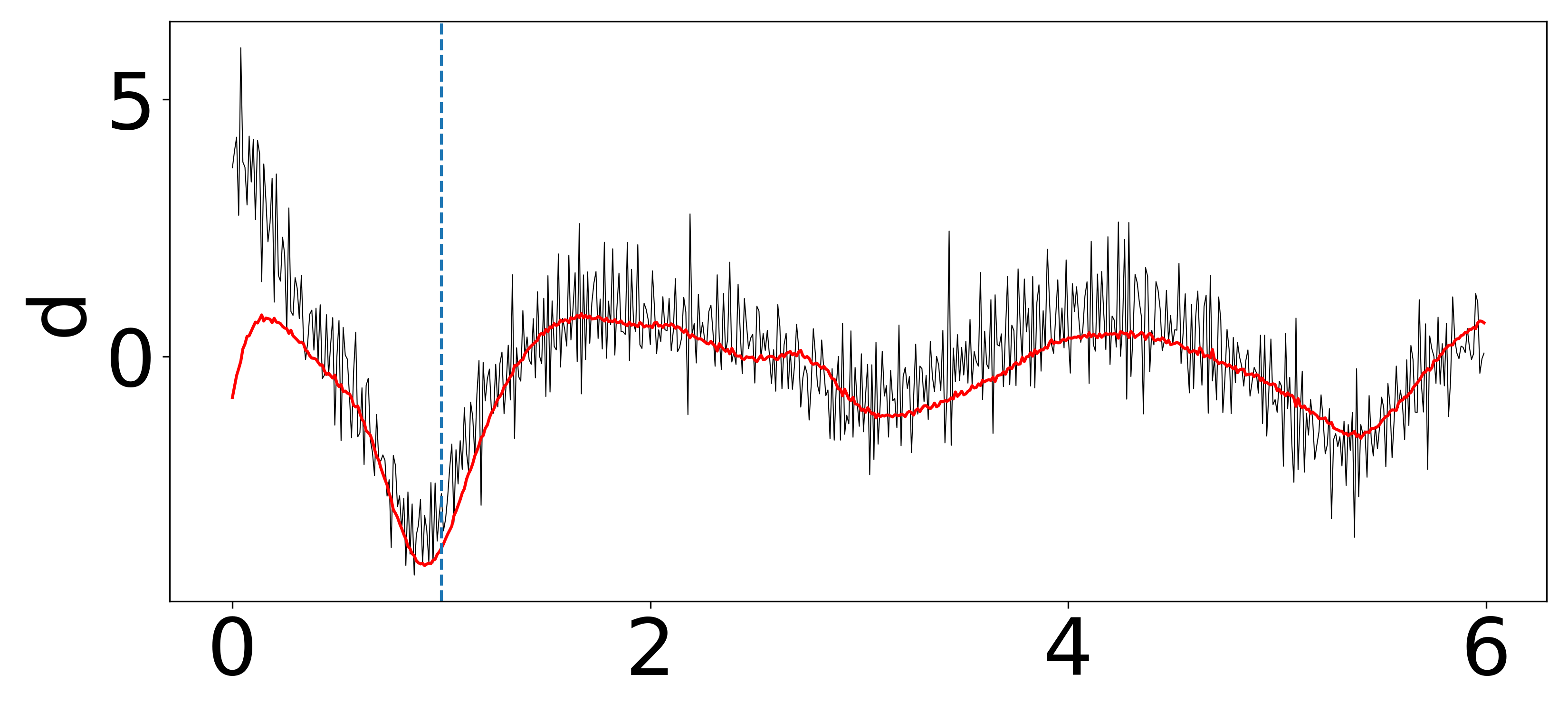} \\
\includegraphics[width=0.495\linewidth]{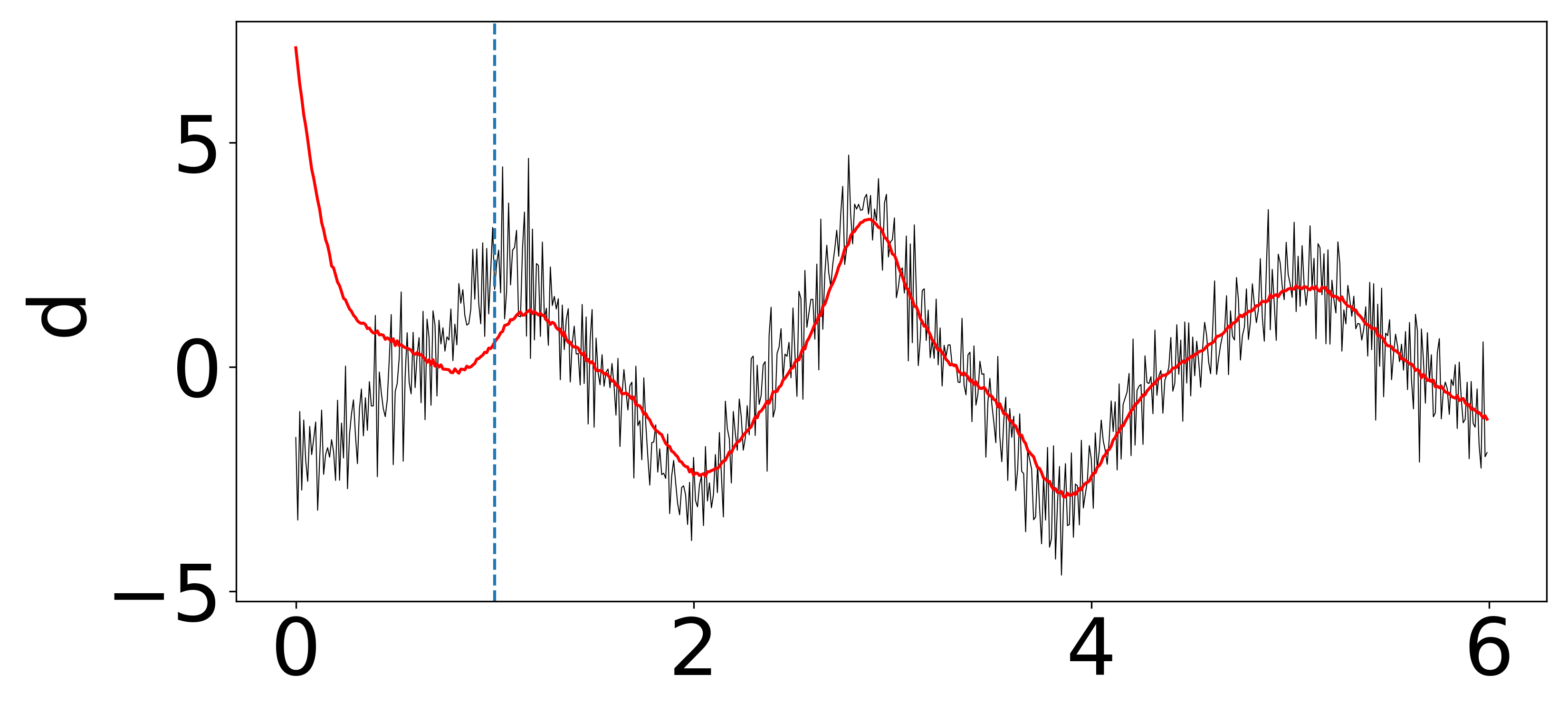} &
\includegraphics[width=0.495\linewidth]{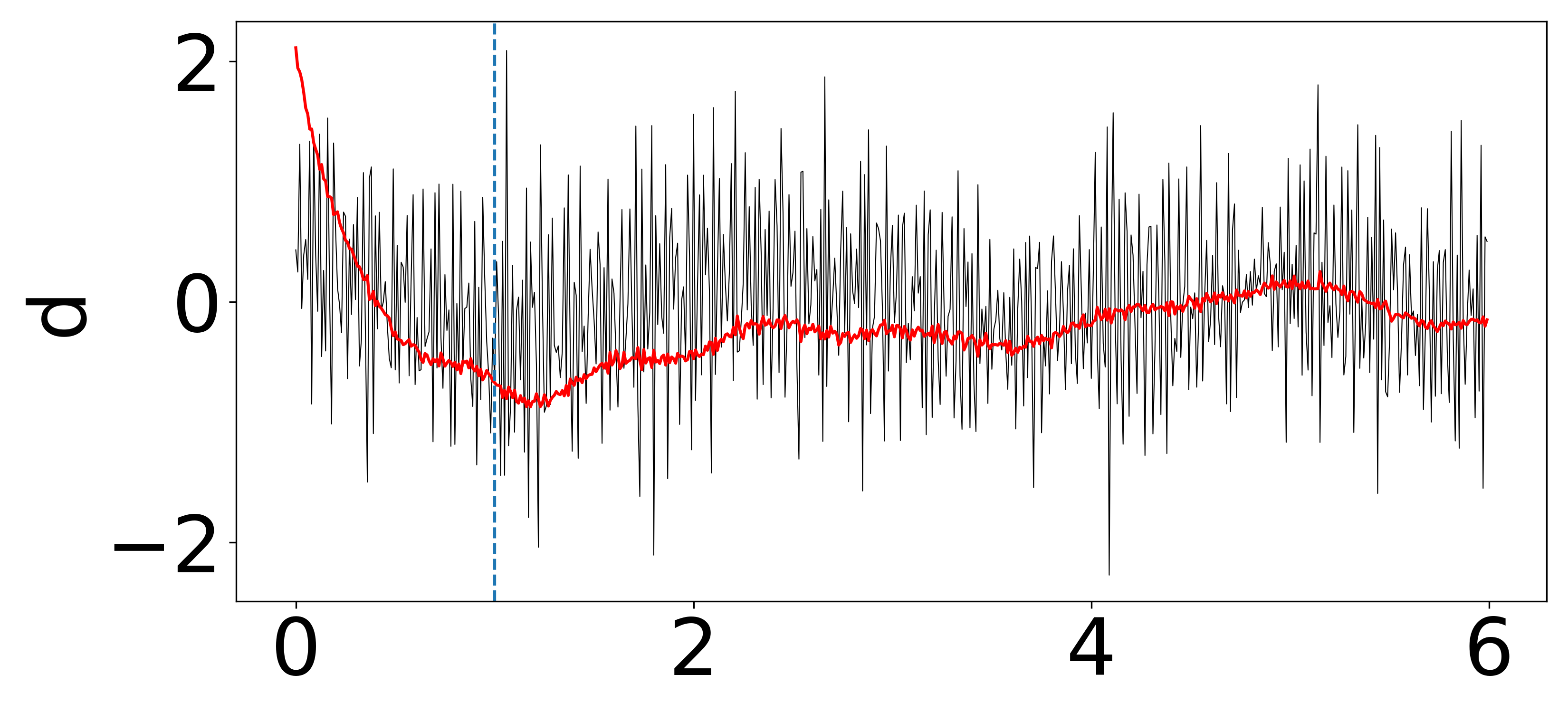}
\end{tabular}
\vspace{-10pt}
\caption{(Coupled nonlinear oscillator) Test-set rollouts from Eq.~\eqref{eq:latent_model} for the predicted disturbance as functions of time. Black: disturbance proxy $y(t)$; red: rollout prediction $\hat d(t)$. The proxy $y$ is visibly noisier, while $\hat d$ is smooth. Mean squared error is $2.54$ overall and $0.88$ after discarding the first second of each trajectory as a burn-in period (vertical dashed line).}
\label{fig:prediction_duffing}
\end{figure}

\paragraph{Tracking Controller Test}
To illustrate the benefit of predictive capability, we made $y$ available at only 1 Hz. We also injected different levels of Gaussian noise (its standard deviation denoted by $\sigma$) into $y$ to test for robustness (Fig.~\ref{fig:robustness_sweep}). We used $\Lambda = 2.0$ and $K = 12.0$ for all four controllers.
\begin{figure} [h!]
\centering
\includegraphics[width=0.95\linewidth]{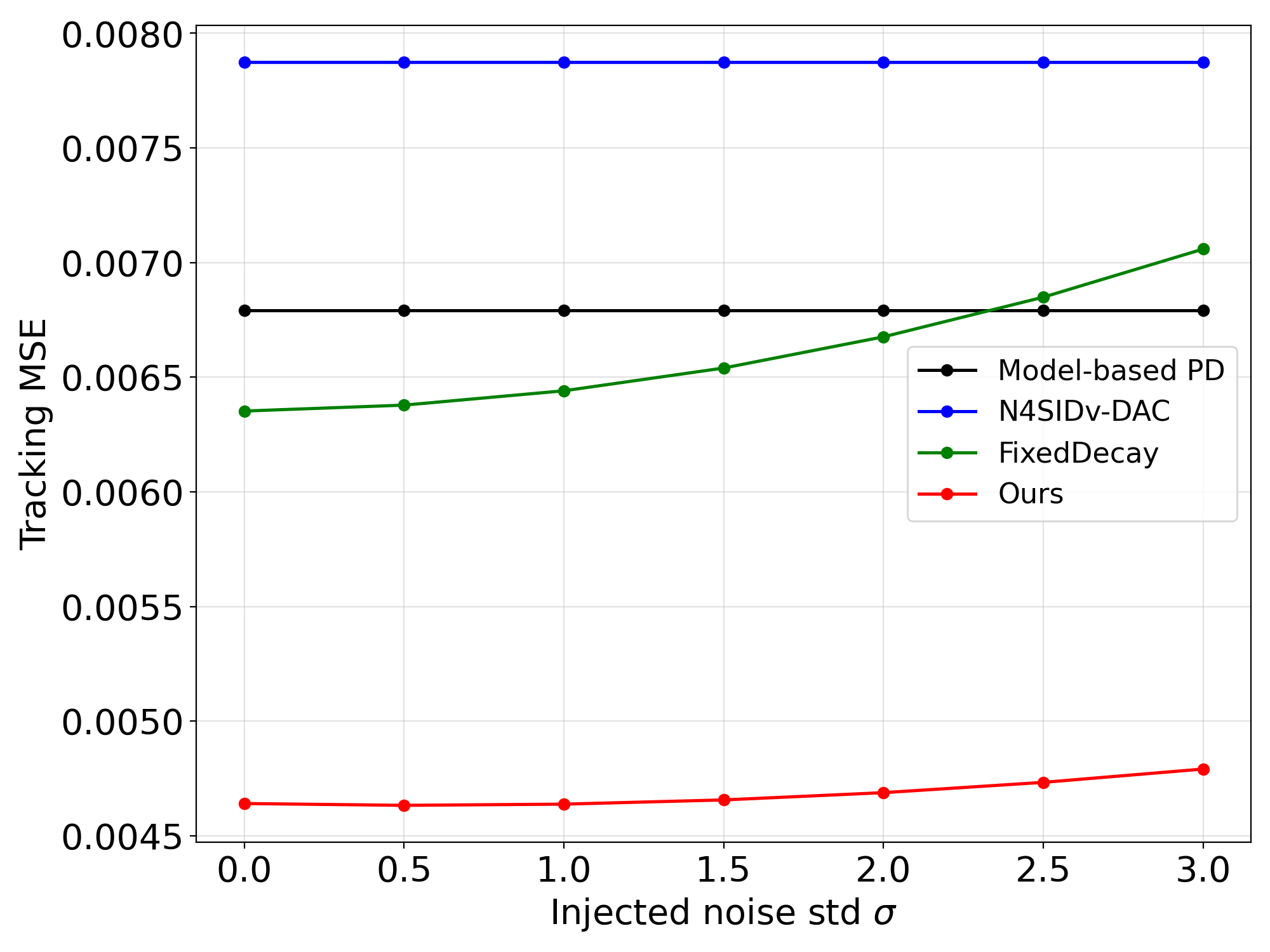}
\vspace{-10pt}
\caption{(Coupled nonlinear oscillator) Robustness sweep for tracking control under additional Gaussian noise injected into the sparsely provided disturbance proxy $y$ (averaged over 30 reference trajectories, with each trajectory being 20 seconds long). Mean tracking MSE is plotted versus the added noise level $\sigma$, where $\sigma=0$ denotes the baseline noise setting. The proposed dynamical-representation controller performs the best. As $\sigma$ increases, the neural adaptive controllers degrade because the disturbance proxy becomes increasingly corrupted. The simpler \textsc{N4SID}v-DAC baseline resulted in a flat graph because the fitted discrete-time $A$, $B_1$, and $B_2$ satisfied $\left\|A\right\|\approx 0.09 \ll 1$, $\left\|B_1\right\|\approx 0.08 \ll 1$, $\left\|B_2\right\|\approx 0.005 \ll 1$ and hence the disturbance compensation mostly came from $\hat d \approx D_1 \phi + D_2 u$.}
\label{fig:robustness_sweep}
\vspace{-20pt}
\end{figure}

In this experiment, the acceleration $\ddot{x}_1$ was estimated by taking backward finite differences of the observed $\dot x_1$, and then filtered by an exponential moving average. The disturbance proxy $y$ was then calculated as $m_1 \ddot x_1 +c_1 \dot x_1 +k_1 x_1 + \alpha_1 x_1^3-u$; after this, it was corrupted by noise level $\sigma$. Note that $\sigma=0$ already results in a noisy proxy $y$ because the measurements for $x_1$ and $\dot x_1$ are noisy. Each reference trajectory was generated as a sum of randomized sinusoidal signals. 
 
Figure~\ref{fig:robustness_sweep} summarizes the robustness sweep for the proposed dynamical-representation controller, the fixed-decay representation ablation, the \textsc{N4SID}v-DAC controller, and the model-based PD baseline. In the nominal-noise case ($\sigma = 0$), the proposed controller achieves the lowest average tracking MSE; in particular, our controller reduced tracking MSE by 27\% compared to the closest competitor. In this highly nonlinear setting, \textsc{N4SID}v-DAC performed worse than model-based PD, suggesting that the identified LTI disturbance model was inadequate. Overall, the proposed method provides the strongest performance and remains strong throughout the sweep.

\section{CONCLUSIONS}
This paper presented a composite adaptive control framework based on a representation-learning approach for systems with unobserved (or partially observed), dynamically coupled disturbance sources. The presented controller complements prior ``fixed-decay'' last-layer representation-learning approaches by incorporating the classical disturbance-accommodating control (DAC) viewpoint. By modeling the disturbance as the output of a uniformly contractive latent dynamical system, the proposed method introduces a dynamics prior that supports disturbance prediction while mitigating sensitivity to latent-state initialization errors. The latent dynamics are parameterized by neural networks and trained using a statistically principled hard expectation--maximization procedure.

Hardware experiments on a tracked vehicle carrying a partially filled liquid tank and a pendulum illustrate the practical feasibility of the proposed disturbance estimator and the composite adaptive controller. The learned dynamical representation improves overall tracking performance during aggressive maneuvering, with the clearest improvement observed in yaw-rate regulation. In simulation, experiments on a system of coupled Duffing oscillators demonstrate accurate disturbance prediction and improved tracking relative to the baselines, while remaining strong under progressively noisier disturbance proxies $y$. Here, the sparse availability of $y$ allowed us to assess the role of predictive capability for maintaining tracking performance.

Taken together, the results indicate that the proposed method holds substantial promise for real-world deployment, particularly when disturbance proxies are sampled at intervals which are not negligible relative to the characteristic disturbance timescale, making predictive compensation important between observations. Future work may include evaluating the method across a wider range of hardware platforms and experimental conditions, as well as systematically studying the training dynamics of the hard-EM procedure, particularly the evolution of covariance estimates. Future work may also include replacing the present hard E-step with the standard EM E-step, which takes expectations with respect to the posterior distribution over latent trajectories. The proposed dynamical representation could also serve as a disturbance model in stochastic MPC with chance constraints \cite{Me2016}, potentially useful for real-time trajectory \emph{planning}.

{\footnotesize
\noindent\emph{Acknowledgment:} The authors thank G. X. Johnson, J. Alindogan, and S. Sukhatme for their help with hardware troubleshooting. The authors also thank the anonymous reviewers.
\par}
\vspace{-8pt}

\bibliographystyle{IEEEtran}
\bibliography{IEEEabrv,references}

\section*{APPENDIX}
\subsection{Hard and Soft Expectation--Maximization}
We briefly outline the changes required to replace hard EM with soft EM.
\paragraph{Hard vs. Soft EM}
Suppose a window of measured features, control inputs, and disturbance observations $\{(\phi_k,u_k,y_k)\}_{k=0}^{L-1}$ is given. Write $\mathbf{y} \coloneqq y_{0:L-1}$ and $\eta \coloneqq (\theta,R_d,Q_d)$. In the general case, the missing value is the entire trajectory $Z=\mathbf a_{0:L-1}$. For a fixed $\eta$, the statistical model in Section~\ref{subsec:process_noise_kalman} can be written as
\begin{align*}
\mathbf a_{k+1}&=F_k\mathbf a_k+g_k+\xi_k, & \xi_k&\sim\mathcal{N} (0,Q_d),\\
y_k&=H_k\mathbf a_k+h_k+\epsilon_k,&\epsilon_k&\sim\mathcal{N} (0,R_d),
\end{align*}
with $\mathbf a_0 \sim \mathcal{N} (0,\lambda_a^{-1}I)$, where $\lambda_a>0$ is a fixed hyperparameter. Here, $\mathbf a_0$, $\xi_k$, $\epsilon_k$ are independent. Note that $(F_k,g_k)$ comes from discretizing \eqref{eq:latent_dyn} in time by, say, RK2 or forward Euler; the coefficients $(F_k, g_k, H_k, h_k)$ depend on $\theta$.

At iteration $r$, let $\pi^r(z) \coloneqq p(z\mid\mathbf{y},\eta^r)$ and $Q^r(\eta) \coloneqq -\mathbb E_{\pi^r} \left[\log p(\mathbf{y},Z \mid \eta)\right]$. Standard (soft) EM minimizes $Q^r(\eta)$ with respect to $\eta$, holding $\pi^r$ fixed, to obtain $\eta^{r+1}$. Hard EM instead selects $\hat z^r \coloneqq \arg\max_z \pi^r(z)$ and minimizes $-\log p(\mathbf y,\hat z^r\mid\eta)$ with respect to $\eta$, holding $\hat z^r$ fixed.
In other words, soft EM averages over the latent uncertainty $Z\sim\pi^r$, whereas hard EM uses a single point estimate of $Z$, replacing the soft E-step (expectation) with a hard E-step (posterior maximization).

\paragraph{The Soft EM Loss}
Denote the current parameters as $\eta^r=(\theta^r,R_d^r,Q_d^r)$, where we assume $R_d^r \succ 0$ and $Q_d^r \succ 0$. Compute the \emph{smoothed} means and covariances using a Kalman smoother under parameters $\eta^r$, and denote the smoothed moments by $m_k=\mathbb E_{\pi^r}[\mathbf a_k]$, $P_k=\operatorname{Cov}_{\pi^r}(\mathbf a_k)$, and $C_k=\operatorname{Cov}_{\pi^r}(\mathbf a_{k+1},\mathbf a_k)$; here, the cross-covariances are obtained from the smoothing gains and the smoothed covariances as follows~\cite[Eq.~(12.13)]{SaSv2023}:
\begin{equation}
G_k \coloneqq P_{k|k}F_k^\top(P_{k+1|k})^{-1}, \qquad C_k=P_{k+1}G_k^\top.
\end{equation}
The posterior moments $(m_k,P_k,C_k)$ remain fixed throughout the M-step.

For candidate parameters $\eta$, let $\mathbf a_{0:L-1}\sim\pi^r$ and define the random variables $r_k^y \coloneqq y_k - H_k \mathbf{a}_k - h_k$ and $r_k^a \coloneqq \mathbf a_{k+1} - F_k \mathbf{a}_k - g_k$. We have
\begin{align*}
    r^y_k &= e_k - H_k (\mathbf a _k -m_k),\\
    r^a_k &= v_k + (\mathbf a_{k+1}-m_{k+1}) - F_k (\mathbf a _k -m_k),
\end{align*}
where $e_k\coloneqq y_k-H_k m_k-h_k$ and $v_k\coloneqq m_{k+1}-F_k m_k-g_k$. Therefore, the residual second moments are
\begin{subequations}
\begin{align}
S^y_k&\coloneqq \mathbb{E}_{\pi^r}[r_k^y (r_k^y)^\top]=e_k e_k^\top+H_k P_k H_k^\top,\\
\begin{split}
S^a_k&\coloneqq \mathbb{E}_{\pi^r}[r_k^a (r_k^a)^\top]=v_k v_k^\top+P_{k+1}+F_k P_k F_k^\top\\&\qquad\qquad\qquad\qquad\qquad-C_k F_k^\top-F_k C_k^\top.
\end{split}
\end{align}
\end{subequations}
From this, we can calculate the soft-EM loss averaged over a mini-batch of $B$ windows (indexed by $w=1,\ldots,B$) as
\begin{equation}
\label{eq:appen_em_loss}
\begin{split}
Q&^r(\eta)=\frac 1 B \sum_{w=1}^B\frac12\Biggl[\sum_{k=0}^{L-1} \left[\operatorname{tr}(R_d^{-1}S^{y,(w)}_k)+\log|R_d|\right]\\
&+\sum_{k=0}^{L-2} \left[\operatorname{tr}(Q_d^{-1}S^{a,(w)}_k)+\log|Q_d|\right]\\
&\!\!\!\!\!\!+\lambda_a \! \left(\!\|m_0^{(w)}\|^2\!+\!\operatorname{tr}P_0^{(w)}\!\right)\!\!+\!L(d_a+d_u)\log(2\pi)\!-\!d_a \log\lambda_a \!\Biggr]\!.
\end{split}
\end{equation}
Note that the last line in \eqref{eq:appen_em_loss} can be omitted in the loss function for the M-step.

Now, our M-step is as follows: compute the new $(R_d, Q_d)$ by
\begin{subequations}
\begin{align}
R_d^+&=\frac{1}{BL} \sum_{w=1}^{B}\sum_{k=0}^{L-1}S^{y,(w)}_k,\\
Q_d^+&=\frac{1}{B(L-1)} \sum_{w=1}^{B}\sum_{k=0}^{L-2}S^{a,(w)}_k,
\end{align}
\end{subequations}
and then update $\theta$ using an optimizer such as AdamW to minimize the loss in \eqref{eq:appen_em_loss}, omitting its last line. In practice, eigenvalue clipping might be needed to ensure positive definiteness of $(R_d^+, Q_d^+)$. 

\paragraph{The $Q_d=0$ Variant}
In Section~\ref{subsec:hardem}, the latent trajectory is determined by $\theta$ and the initial state $\mathbf a_0$, so we take the missing value as $Z=\mathbf a_0$. Hard EM uses the MAP estimate for $\mathbf a_0$; soft EM instead retains the posterior distribution $\mathbf a_0\mid\mathbf y,\eta^r\sim\mathcal N(m_0,P_0)$. The posterior moments $(m_0,P_0)$ remain fixed throughout the soft-EM M-step.

Write the deterministic rollout-plus-readout map as $\hat d_k=M_k\mathbf a_0+c_k$ where $(M_k,c_k)$ depend on the candidate $\theta$, and define
\begin{align}
e_k&\coloneqq y_k-M_km_0-c_k,\\
S^y_k&\coloneqq e_ke_k^\top+M_kP_0M_k^\top.
\end{align}
Assuming $R_d \succ 0$, the soft-EM objective reduces to
\begin{equation}
Q^r(\eta)=\frac 1 B \sum_{w=1}^{B}\frac 1 2 \sum_{k=0}^{L-1}\left[\operatorname{tr}(R_d^{-1}S_k^{y,(w)})+\log|R_d|\right]
\end{equation}
up to additive constants.

\subsection{Regularization in the \textsc{N4SID}-Variant Implementation}
Relative to the algorithm in~\cite[Fig.~4.7]{OvMo1996}, our implementation applies ridge-regularized pseudoinverse approximations at three locations: \emph{(i)} in the oblique projections~\cite[Eq.~(1.7)]{OvMo1996}, when computing $\mathcal O_i$ and $\mathcal O_{i+1}$ (step~1); \emph{(ii)} $\Gamma_i^\dagger$ and $\Gamma_{i-1}^\dagger$ in determining the state sequences (step~5); and \emph{(iii)} $\mathcal Z^\dagger$, with $\mathcal Z\coloneqq [\tilde X_i^\top, U_{i|i}^\top]$, in the least-squares solve for $(A,B,C,D)$ (step~6). For each ridge-regularized pseudoinverse approximation, we solve $(G+\lambda_G I)X=B_G$, where $G\in\mathbb R^{n\times n}$ is the associated Gram matrix, $B_G$ denotes the right-hand side, and $\lambda_G=0.1\operatorname{tr}(G)/n$. Here, $0.1$ is a relative regularization parameter chosen for our implementation.

\end{document}